%%%%%%%%%%%%%%%%%%%%%%%%%%%%%%%%%%%%%%%%%%%%%%%%%%%%%%%%%%%%%%%%%%%%%%%%%%%%%%%%
\documentclass[journal]{IEEEtran}
\usepackage{mathtools}
\usepackage{amsthm}
\usepackage{graphicx}
\usepackage{mathrsfs}
\usepackage{caption}
\usepackage[crop=preview]{pstool}
\usepackage{pgfplots}
\pgfplotsset{compat=newest}
\pgfplotsset{plot coordinates/math parser=false}                                                    
\usepackage{balance}
\usepackage{amssymb}
\usepackage{amsmath} % assumes amsmath package installed 
\usepackage{subfigure}
%===============================================================================
%%%%%%%%%%%%%%%%%%%%%%%%%%%%%%%%%%Definitions%%%%%%%%%%%%%%%%%%%%%%%%%%%%%%%%%%%%%%%%%%

\newcommand{\R}{\mathbb{R}}
\newcommand{\range}{\operatorname{rge}}
\newcommand{\gph}{\operatorname{gph}}
\newcommand{\ie}{{\itshape i.e.}}
\newcommand{\eg}{{\itshape e.g.}}
\newcommand{\Sone}{\mathbb{S}^1}

\newcommand{\dom}{\operatorname{dom}}
\newcommand{\sort}{\operatorname{sort}}
\newcommand{\dist}{\operatorname{d}}

\newcommand{\twopi}{2\pi}
\newcommand{\nex}{\operatorname{next}_{\mathcal{V}}}

\newcommand{\Hy}{\mathcal{H}}

\newcommand{\1}{{\bf 1}}

\newcommand{\dstar}{{2\pi\frac{n-1}{n}}}

\newtheorem{theorem}{Theorem}

\newtheorem{definition}{Definition}

\newtheorem{lemma}{Lemma}

\newtheorem{assumption}{Assumption}

\newtheorem{problem}{Problem}
\newtheorem{proposition}{Proposition}
\newtheorem{remark}{Remark}

\title{\LARGE \bf Robust Almost Global Splay State Stabilization of Pulse Coupled Oscillators}
\author{Francesco~Ferrante~\IEEEmembership{Member,~IEEE,} Yongqiang Wang~\IEEEmembership{Senior Member,~IEEE}% <-this % stops a space
\thanks{Francesco Ferrante and Yongqiang Wang are with the Department of Electrical and Computer Engineering, Clemson University,
Clemson, SC 29631, 
Email: francesco.ferrante.2011@ieee.org, yongqiw@clemson.edu}% <-this % stops a space
\thanks{This work has been partially supported by the Institute for Collaborative Biotechnologies
through grant W911NF-09-0001. A preliminary and abridged version of this paper has been presented in \cite{Ferrante2016}.}}
\begin{document}
\maketitle
\vspace{-0.5cm}

\begin{abstract}
This technical note deals with the problem of asymptotically stabilizing the splay state configuration of a network of identical pulse coupled oscillators through the design of the their phase response function. The network of pulse coupled oscillators is modeled as a hybrid system. The design of the phase response function is performed to achieve almost global asymptotic stability of a set wherein oscillators' phases are evenly distributed on the unit circle. To establish such a result, a novel Lyapunov function is proposed. Robustness with respect to frequency perturbation is assessed.
Finally, the effectiveness of the proposed methodology is shown in an example.
\end{abstract}
\vspace{-0.5cm}
\section{Introduction}
Pulse coupled oscillators are oscillators connected together to form a network of systems interacting through the exchange of impulsive messages. In particular, each time an oscillator completes a cycle it sends a pulse (fires) to the other oscillators. Such oscillators, in response to the received pulse, reset their phase according to a phase response function; see \cite{canavier2010pulse,mirollo1990synchronization}. Despite their simplicity, pulse coupled oscillators may manifest complex collective behaviors. For such a reason, their adoption as modeling and design tools in complex biological and engineered systems has provided interesting results in different fields; see, \eg, \cite{mirollo1990synchronization,pagliari2010,peskin1975mathematical} just to cite a few.
Concerning engineered systems, the most investigated behaviors in pulse coupled oscillators are synchronization and desynchronization. When a network of pulse coupled oscillators evolve in synchronization, oscillators evolve with the same phase each other. When a network of pulse coupled oscillators reaches desynchronization, oscillators have different phases and the time between consecutive firings is constant \cite{pagliari2010}. Desynchronization has relevant applications in engineered systems as it provides a simple, robust, and decentralized way to generate round-robin schedulers; \cite{pagliari2010}.  
In particular, as pointed out in \cite{pagliari2010}, a network of pulsed coupled oscillators can manifest two possible kinds of desynchrony: splay state (or strict desynchronization) and weak desynchronization. In the splay state, oscillators' phases are evenly distributed around the unit circle (\cite{sepulchre2007stabilization}) implying that the time between consecutive firing instants is constant. 
 Instead, in the weak desynchronization, there is no equal spacing between oscillators' phases and equal separation of firing times is achieved by  
making oscillators decrease their phase at each firing time. This kind of desynchronized behavior has been studied in \cite{philips2015} via hybrid systems tools.  It is worthwhile to observe that although the occurrence of both behaviors ensures constant firing intervals, since in the splay state configuration no interaction between oscillators is needed, such a configuration reveals to be more robust to pulse lossess and therefore preferable in some applications; see \cite{pagliari2010}. 
 
In this technical note, we address the problem of stabilizing the splay state configuration of a network of pulse coupled oscillators connected in an all-to-all topology. Inspired by \cite{nunez2015synchronization, nunez2016synchronization, philips2015}, we adopt a hybrid systems approach to tackle the considered problem. In particular, we model the network of pulse coupled oscillators as a hybrid system $\mathcal{H}$ in the framework of \cite{goebel2012hybrid}. Then, building on this model, similarly as in \cite{philips2015}, we recast the splay state stabilization problem as a set stabilization problem for $\mathcal{H}$. 
As a second step, by relying on the notion of shortest containing arc (\cite{dorfler2014synchronization,nunez2016synchronization}), we show that the splay state configuration coincides with the (unique) configuration for which the length of the shortest containing arc is maximized. Building on such a property, we define a Lyapunov function related to the length of the shortest containing arc. Finally, thanks to the invariance principle for hybrid systems \cite{sanfelice2007invariance}, inspired by \cite{pagliari2010}, we provide sufficient conditions on the phase response function to ensure almost global asymptotic stability of the splay state configuration. Furthermore, by relying on some regularity properties satisfied by the data of $\mathcal{H}$, we show 
that the splay state configuration is structurally robust to frequency perturbations, making the results presented in this paper appealing in practice. Finally, the theoretical findings are illustrated via a numerical example.
\subsection*{Contribution}
The contribution of this paper are as follows. Firstly, we formalize the splay state stabilization problem for pulse coupled oscillators in the hybrid systems framework in \cite{goebel2012hybrid}, which allows to obtain formal robustness guarantees about the perturbed behavior of the network. In this sense, this work can be seen as complementary to \cite{philips2015}, where weak desynchronization is considered.
Secondly,  we propose sufficient conditions on the phase response function ensuring asymptotic stability of the splay state configuration. Thirdly, for the first time, we make use of the notion of shortest containing arc in a desynchronization setting to build a Lyapunov function. Observe that, as opposed to synchronization where the relationship between the 
length of the shortest containing arc and the synchronous configuration is obvious, in the case of splay state, such a relationship is nontrivial and needs to be worked out.

A preliminary version of this work was presented in \cite{Ferrante2016}. On the one hand, this paper aims at completing the preliminary results presented in \cite{Ferrante2016} by providing further technical clarifications, refinements, and proofs. On the other hand, a thorough comparison with existing results as well as the analysis in the presence of frequency perturbations are presented.

The remainder of the technical note is organized as follows. Sections I.A concerns preliminary notions/results. 
Section I.B gives some basic notions on hybrid systems. 
Section II presents the system under consideration, the hybrid modeling and the problem we solve. Section III is dedicated to the main results. Section IV is devoted to a numerical example. Numerical simulations of hybrid systems are performed in \emph{Matlab} via the \emph{Hybrid Equations Toolbox} \cite{sanfelice2013toolbox}.

{\small {\bf Notation}:
The set $\mathbb{N}$ is the set of strictly positive integers, $\mathbb{N}_0=\mathbb{N}\cup\{0\}$, $\mathbb{R}_{\geq 0}$ represents the set of non-negative real scalars, and $\mathbb{C}$ represents the set of complex numbers. We denote $\iota=\sqrt{-1}$ as the imaginary unit and $\Sone=\{z\in\mathbb{C}\colon \vert z \vert=1\}$ the unit circle, where $\vert z\vert$ is the modulus of $z$. 
Given a set $S$, $\mathcal{C}(S)$ denotes the cardinality of $S$.
For a vector $x\in\mathbb{R}^n$, $x_i$ denotes its $i$-th entry and $\vert x\vert$ the Euclidean norm. The symbol $\1_n$ denotes the vector of $\R^n$ whose entries are equal to one. Given two vectors $x,y$, we denote $(x,y)=[x'\,\,y']'$. The set $\mathbb{B}$ is the closed unit ball, of suitable dimensions, in the Euclidean space. 
Given a vector $x\in\mathbb{R}^{n}$  and a closed set $\mathcal{A}$, the distance of $x$ to $\mathcal{A}$ is defined as 
$\vert x \vert_{\mathcal{A}}=\inf_{y\in {\mathcal{A}}} \vert x-y \vert$. Given $\mathcal{I}\subset [0,\twopi]$, we denote $\Psi(\mathcal{I})=\{z\in \Sone\colon z=e^{i\omega}\quad \omega\in\mathcal{I}\}$. Given a set $S$, $m(S)$ stands for the Lebesgue measure of $S$.
Let $n\in\mathbb{N}$, we denote $\mathcal{V}=\{1,2,\dots,n\}$. Given $i\in\mathcal{V}$, we denote $\nex(i)=i+1$ if $i<n$ while $\nex(i)=1$ if $i=n$. Given a vector $x\in\R^n$, we denote $\sort(x)$ the vector whose entries correspond to the entries of $x$ sorted from the smallest to the largest value. Given a function $f\colon \R^n\rightarrow \R^m$, we denote $\range f$ the range of $f$. Let 
$F\colon\R^n\rightrightarrows\R^m$ be a set-valued mapping, we denote $\gph F=\{(x,y)\in\R^{n}\times \R^m\colon y\in F(x)\}$. Given a function $f\colon\dom f\rightarrow\R$ and a real scalar $\mu$, $f^{-1}=\{x\in\dom f\colon f(x)=\mu\}$ is the $\mu$-level set of $f$.}
\vspace{-0.1cm}
 
\subsection{Preliminaries}
\begin{definition}
Given $\theta_1,\theta_2\in\R$, $\dist(\theta_1,\theta_2)$
represents the geodesic distance between  the point $x_1=e^{\iota \theta_1}$ and $x_2=e^{\iota \theta_2}$ on the unit circle $\Sone$.
\end{definition}
In the sequel, with a slight abuse of notation, given $\theta_1,\theta_2\in\R$, we refer to $\dist(\theta_1,\theta_2)$ as the geodesic distance between $\theta_1$ and $\theta_2$.
\begin{definition}
A subset $\alpha\subset\Sone$ is said to be an arc if it is closed and connected.
Given $\theta\in\R^n$ and an arc $\alpha\subset\Sone$. We say that $\alpha$ contains $\theta$ if for each $i=1,2,\dots,n$, $x_i=e^{\iota \theta_i}\in \alpha$.
In particular, given $\theta\in\R^n$, we denote $\Gamma(\theta)\subset \Sone$ the set of all the arcs containing $\theta$. Given an arc $\alpha\subset\Sone$, we denote $\ell(\alpha)\in[0,2\pi)$ the length of the arc $\alpha$.
\end{definition}
\begin{definition}
Given $\theta\in\R^n$, we denote $\gamma(\theta)$ the length of the shortest arc containing
$\theta$. In particular\footnote{It can be readily shown that for any $\theta\in\R^n$, $\mathcal{C}(\Gamma(\theta))=n$. Therefore, $\gamma$ is well defined.}, $\gamma(\theta)=\min_{\omega\in\Gamma(\theta)} \ell(\omega)$. 
\end{definition}
Now, consider the following result which will be useful in the sequel.
\begin{lemma}
\label{lemma:length}
For each $\theta\in\R^n$, one has $\gamma(\theta)\leq 2\pi-\frac{2\pi}{n}$.
\end{lemma}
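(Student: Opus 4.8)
The plan is to reduce the claim to a pigeonhole argument on the arcs separating consecutive points of the configuration on $\Sone$. First I would pass to sorted representatives of the angles: replacing each $\theta_i$ by its representative in $[0,\twopi)$ and relabeling after sorting, we may assume $0\le\phi_1\le\phi_2\le\dots\le\phi_n<\twopi$, and it then suffices to bound the length of the shortest arc containing the points $e^{\iota\phi_1},\dots,e^{\iota\phi_n}$. Define the $n$ \emph{cyclic gaps} $g_i=\phi_{i+1}-\phi_i$ for $i=1,\dots,n-1$ and $g_n=\twopi-\phi_n+\phi_1$. By construction $g_i\ge 0$ for every $i$ and $\sum_{i=1}^{n}g_i=\twopi$, so by averaging there exists $k\in\mathcal{V}$ with $g_k\ge\twopi/n$.

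Next I would exhibit a containing arc of length $\twopi-g_k$, obtained by deleting the largest gap. Let $G_k\subset\Sone$ be the open arc lying strictly between $e^{\iota\phi_k}$ and $e^{\iota\phi_{\nex(k)}}$ and meeting none of the other points; concretely, $G_k=\Psi((\phi_k,\phi_{k+1}))$ when $k<n$, and $G_k=\Psi((\phi_n,\twopi)\cup[0,\phi_1))$ when $k=n$. In either case $G_k$ is open in $\Sone$, has length $g_k>0$, and contains none of the points $e^{\iota\phi_j}$ (here the sorting is used). Put $\alpha=\Sone\setminus G_k$. Then $\alpha$ is closed, and connected: the complement in $\Sone$ of a nonempty proper open arc is a closed arc (and in the extreme case $G_k=\Sone\setminus\{e^{\iota\phi_k}\}$ it is a single point). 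Hence $\alpha$ is a closed connected arc; moreover $\ell(\alpha)=\twopi-g_k$, and by the ordering of the $\phi_j$ every $e^{\iota\phi_j}$ lies in $\alpha$, so $\alpha\in\Gamma(\theta)$. Therefore $\gamma(\theta)\le\ell(\alpha)=\twopi-g_k\le\twopi-\twopi/n$, as claimed. As a byproduct this shows $\Gamma(\theta)\ne\emptyset$, so the minimum defining $\gamma(\theta)$ is taken over a nonempty set.

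Since the argument is short, the main obstacle is bookkeeping rather than a genuine difficulty. One must treat the wrap-around case $k=n$, where the deleted gap straddles the angle $0$, together with the degenerate configurations in which several of the $\phi_j$ coincide --- in particular the extreme case of all points equal, where $g_n=\twopi$ and $\alpha$ collapses to a single point, of length $0\le\twopi-\twopi/n$. These are handled by phrasing containment purely in terms of $\Sone$, via the definition of $\Psi$, rather than through the interval representatives; alternatively, one may first rotate the whole configuration so that the maximal gap avoids the angle $0$, which reduces everything to the case $k<n$. No matching lower bound on $\gamma(\theta)$ is needed, so there is no need to characterize the shortest containing arc precisely.
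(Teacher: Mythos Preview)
Your proof is correct. Both your argument and the paper's rest on the same pigeonhole observation that the $n$ cyclic gaps between consecutive sorted phases sum to $\twopi$, so at least one gap is of size $\geq\twopi/n$. The paper, however, proceeds by contradiction: it assumes $\gamma(\hat\theta)>\twopi-\twopi/n$, invokes the characterization $\gamma(\hat\theta)=\twopi-\max_i\dist(\hat\theta_i,\hat\theta_{\nex(i)})$ (valid when $\gamma\geq\pi$, cited from an external reference) to force every gap below $\twopi/n$, and then sums to reach a contradiction. Your route is direct and self-contained: you explicitly exhibit the containing arc $\alpha=\Sone\setminus G_k$ obtained by deleting a maximal gap, so you need neither the contradiction framing nor the external formula for $\gamma$ in terms of the maximal gap. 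This buys you a cleaner argument that also handles the degenerate (coincident points) and wrap-around cases uniformly; the paper's version is slightly shorter on the page only because it outsources the key geometric fact to a citation.
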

\begin{proof}
Without loss of generality, assume by contradiction that there exists $\hat{\theta}\in[0,2\pi]^n$ with $0=\hat{\theta}_1\leq \hat{\theta}_2\leq \dots\leq \hat{\theta}_n$, such that 
$\gamma(\hat{\theta})>2\pi-\frac{2\pi}{n}$. Since $2\pi-\frac{2\pi}{n}\geq \pi$, 
as argued in \cite{nunez2015synchronization}, one has that $\gamma(\hat{\theta})=2\pi-\underset{i\in\mathcal{V}}{\max} \dist(\hat{\theta}_i,\hat{\theta}_{\nex(i)})$, hence for each $i\in\mathcal{V}$, $\dist(\hat{\theta}_i,\hat{\theta}_{\nex(i)})<\frac{2\pi}{n}$. On the other hand, since by assumption $\gamma(\hat{\theta})> 2\pi-\frac{2\pi}{n}\geq \pi$, it can be easily noticed that necessarily for each $i\in\mathcal{V}\setminus\{n\}$, $\dist(\hat{\theta}_i,\hat{\theta}_{\nex(i)})=\vert \hat{\theta}_i-\hat{\theta}_{\nex(i)}\vert=\hat{\theta}_{\nex(i)}-\hat{\theta}_i$ and $\dist(\hat{\theta}_n,\hat{\theta}_{1})=2\pi-\hat{\theta}_n$.
 Thus, 
$\hat{\theta}_{n}=\sum_{i\in\mathcal{V}\setminus\{n\}}(\hat{\theta}_{\nex(i)}-\hat{\theta}_i)<2\pi\frac{n-1}{n}$
but this implies that $\dist(\hat{\theta}_n,\hat{\theta}_1)=2\pi-\hat{\theta}_n>\frac{2\pi}{n}$ contradicting the fact that  for each $i\in\mathcal{V}$, $\dist(\hat{\theta}_i,\hat{\theta}_{\nex(i)})<\frac{2\pi}{n}$.
\end{proof}
\subsection{Hybrid systems}
\label{sec:hybrid}
We consider hybrid systems with state $x\in\R^n$ of the form 
$$
\mathcal{H}\left\lbrace
\begin{array}{ccll}
\dot{x}&=&f(x)&\quad x\in C\\
x^+&\in&G(x)&\quad x\in D
\end{array}\right.
$$
that we represent by the shorthand notation $\mathcal{H}=(C, f, D, G)$. A solution to $\mathcal{H}$ is any hybrid arc defined over a subset of $\R_{\geq 0}\times\mathbb{N}_0$ that satisfies the dynamics of $\mathcal{H}$.
A solution to a hybrid system is said to be \emph{complete} if its domain is unbounded and \emph{maximal} if it is not the truncation of another solution. Given a set $S$, we denote $\mathcal{S}_{\mathcal{H}}(S)$ the set of all maximal solutions $\phi$ to $\mathcal{H}$ with $\phi(0,0)\in S$.
Given $\mathcal{H}=(C,f,D,G)$, we say that $\mathcal{H}$ satisfies the \emph{hybrid basic conditions} (\cite{goebel2012hybrid}) if: $C$ and $D$ are closed in $\R^n$; $f\colon \R^n\rightarrow\R^n$ is continuous on $C$, $G\colon\R^n\rightrightarrows\R^n$ is an outer semicontinuous set-valued map \footnote{A set-valued map $M\colon \R^n\rightrightarrows\R^n$ is outer semicontinuous if its graph is closed; see \cite{rockafellar2009variational}.} nonempty and locally bounded on $D$. 
\begin{definition}[$(\tau,\varepsilon)$-closeness of hybrid arcs \cite{goebel2012hybrid}]
Given $\tau\geq 0$ and $\varepsilon>0$, two hybrid arcs $\xi_1$ and $\xi_2$ are $(\tau,\varepsilon)$-close if
\begin{itemize}
\item[(a)] for all $(t,j)\in\dom \xi_1$ with $t+j\leq \tau$ there exists $s$ such that $(s,j)\in\dom\xi_2$, $\vert t-s\vert<\varepsilon$, and $\vert \xi_1(t,j)-\xi_2(s,j)\vert\!<\!\varepsilon$;
\item[(b)] for all $(t,j)\in\dom \xi_2$ with $t+j\leq \tau$ there exists $s$ such that $(s,j)\in\dom\xi_1$, $\vert t-s\vert<\varepsilon$, and 
$\vert \xi_2(t,j)-\xi_1(s,j)\vert\!<\!\varepsilon$.
\end{itemize}
\end{definition}
\section{Problem statement}
\subsection{System description}
In this paper, we analyze an all-to-all network of $n\in\mathbb{N}\setminus\{1\}$ identical pulse-coupled oscillators (\emph{PCO}). 
Each oscillator is characterized by a phase variable $x_i\in[0,2\pi]$ for each $i\in\mathcal{V}\coloneqq \{1,2,\dots,n\}$, that evolves continuously from $0$ to $2\pi$ according to integrate-and-fire dynamics, \ie, $\dot{x}_i=\omega$, where $\omega\in\R_{>0}$ is the natural frequency of the oscillators;
when, $x_i$ reaches $2\pi$, the oscillator $i$ fires, \ie, emits a pulse and resets its phase to zero. Such a pulse is instantaneously broadcast through the network to the other oscillators $x_j$ with $j\in\mathcal{V}\setminus\{i\}$ that in turn reset their phases according to a phase response function $Q$ \cite{canavier2010pulse}, \ie, $x_j^+=x_j+Q(x_j)$, where $x^+_j$ denotes the value of $x_j$ right after a reset.

The problem studied in this technical note consists of designing the function $Q$ in a way such that the configuration in which the $n$ phases are evenly distributed on the unit circle (\emph{splay state} \cite{sepulchre2007stabilization}) is asymptotically stable. In the sequel, we refer to this problem as \emph{splay state stabilization problem}.
\begin{remark}
Assuming an all-to-all communiction between oscillators allows to provide a simple yet robust solution to the considered problem, which is therefore suitable for practical implementations. This assumption is common in the literature on desynchronization of PCOs; see, \eg, \cite{pagliari2010, philips2015}. Indeed, although this assumption simplifies the analysis of the resulting network, it is often used in engineered systems, like wireless network \cite{pagliari2010} and in the study of some biological systems such as cardiac pacemaker cells \cite{peskin1975mathematical} and populations of spiking neurons \cite{gerstner2002spiking}. Further insights about the complications encountered in the case of more general topologies are illustrated throughout the paper.
\end{remark}
\subsection{Hybrid modeling and dynamical properties}
Due to the hybrid behavior of a network of PCOs, as in \cite{nunez2015synchronization,philips2015}, in this paper we model such a network as a hybrid system $\mathcal{H}$ with state $x=(x_1,x_2,\dots, x_n)\in\R^n$, following the formalism in \cite{goebel2012hybrid}. To this end, we need to define $C,f,D,G$ such that $\mathcal{H}=(C,f,D,G)$ provides a model of the PCOs network dynamics.

In particular, we define the flow map and the flow set as follows
\begin{equation}
\label{eq:flowdata}
\begin{array}{lr}
C\coloneqq [0,2\pi]^n,&
f(x)\coloneqq \1_n\omega \qquad\forall x\in C
\end{array}
\end{equation}
 While the jump set can be defined as follows $D=\{x\in[0,2\pi]^n\colon \underset{i\in\mathcal{V}}\max\, x_i=2\pi\}$
so to enforce a jump whenever at one oscillator fires. Finally, to define the jump map $G$ of $\mathcal{H}$,
we follow the idea in \cite{philips2015}. In particular, for each $x\in D$ we set 
$
G(x)=(g_1(x),g_2(x),\dots, g_n(x))
$
where for each $i\in\mathcal{V}$
\begin{equation}
\label{eq:gi}
g_i(x)\hspace*{-0.1cm}\coloneqq\hspace*{-0.1cm}\left\lbrace\begin{array}{ll}
\hspace*{-0.25cm}0& \hspace*{-0.3cm}\mbox{if}\, x_i=2\pi, \forall j\in\mathcal{V}\setminus\{i\}\, x_j\neq 2\pi\\
\hspace*{-0.25cm}x_i+Q(x_i)&\hspace*{-0.3cm} \mbox{if}\, x_i \neq 2\pi, \exists j\in\mathcal{V}\setminus\{i\} \,x_j=2\pi\\
\hspace*{-0.25cm}\{0,x_i+Q(x_i)\}&\hspace*{-0.3cm} \mbox{if}\, x_i=2\pi, \exists j\in\mathcal{V}\setminus\{i\}\, x_j=2\pi
\end{array}\right.
\end{equation}
Notice that whenever one single oscillator fires, its state is reset to zero while the phase of each other oscillator is reset according to the function $Q$. If multiple oscillators fire, then the jump map is set-valued allowing to obtain a hybrid system satisfying the hybrid basic conditions. Moreover, to make $\mathcal{H}$ an accurate description of the evolution of the network of PCOs previously described, we consider the following standing assumption; see \cite{wang2012}.
\begin{figure}[h]
\begin{center}
\pstool[scale=1, trim=15mm 4mm  15mm 7mm, clip=true]{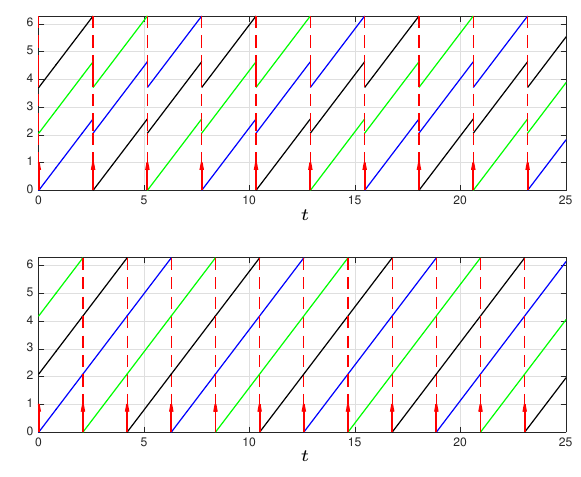}
{\psfrag{t}{$t$}}
\end{center}
\caption{The evolution of a network of three PCOs with $\omega=1$ in the: desynchronized behavior studied in \cite{philips2015} (top), splay state (bottom). Red arrows indicate the firing times. In both cases, the time in between firing events is constant. However, in the case of the (weak) desynchronization analyzed in \cite{philips2015}, equal firing intervals are maintained by pulse-caused phase shift, making this configuration sensitive to pulse losses.}
\label{fig:Sean}
\end{figure}
\begin{assumption}
\label{as:Standing}
The function $Q\colon [0,2\pi]\rightarrow\R$ is such that $\range (z\mapsto z+Q(z))\subset[0,2\pi]$
\hfill $\triangle$
\end{assumption}
The above assumption ensures that $G(D)\subset C\cup D=C$ avoiding the existence of solutions ending in finite time due to jumping outside $C\cup D$.
\begin{remark}
Although the dynamics of the hybrid system $\mathcal{H}$ and the problem setting may look similar to the ones studied in \cite{philips2015}, the problem addressed in this paper cannot be solved by employing the methodology proposed in \cite{philips2015}. Indeed, the desyncronized behavior studied in \cite{philips2015} (weak desynchronization) is defined as the behavior in which the separation between all of the firing events is equal (and nonzero). Obviously, the occurrence of this behavior does not ensure even spacing between phases; see the top figure in \figurename~\ref{fig:Sean}. In fact, the reset rule adopted in \cite{philips2015} is such that the firing of a node makes all the other oscillators decrease their phase, hence preventing from achieving the splay state configuration\footnote{The simulation shown in the top of  \figurename~\ref{fig:Sean} is performed by taking the parameter $\varepsilon$ in \cite{philips2015} equal to $-0.2$ and by picking an initial condition in the desynchronizaiton set given in \cite{philips2015}.}; see \figurename~\ref{fig:Sean}. Therefore, further developments are needed to solve the splay state stabilization problem analyzed in this paper.
\end{remark}
\begin{remark}
In real-time decentralized implementations, whenever multiple oscillators fire (usually) each of the firing oscillators resets its phase to zero overlooking the incoming pulse stimulus received from other firing oscillators. This behavior is well known in the literature of PCOs and prevents to achieve convergence towards the splay state for every initial condition; see, \eg, \cite{pagliari2010}.
It is interesting to observe that, although in the case of simultaneous firings our model gives rise to multiple solutions, the behavior described here above is still captured by our model. 
Indeed, if two or more oscillators fire simultaneously the definition of $G$ allows to consider (among the others) the solutions to $\mathcal{H}$ for which after one jump the phase of two or more oscillators are all equal to zero, preventing $\mathcal{H}$ from approaching the splay state configuration.
\end{remark}
Now consider the following assumption.
\begin{assumption}
\label{as:As1}
The function $Q\colon [0,2\pi]\rightarrow \R$ satisfies the following properties:
\begin{itemize}
\item[(A.1)] $Q$ is continuous on $[0,2\pi]$ 
\item[(A.2)] $Q$ is such that $Q(2\pi)\neq 0$ \hfill $\triangle$
\end{itemize}
\end{assumption}
\begin{proposition}
\label{prop:Prop1}
Let Assumption~\ref{as:As1} hold. Then, the following properties hold:
\begin{itemize}
\item[(a)] $\mathcal{H}$ satisfies the hybrid basic conditions; 
\item[(b)] For each initial condition $\xi\in C\cup D$, there exists at least one nontrivial solution to $\mathcal{H}$. In particular, every $\phi\in\mathcal{S}_{\mathcal{H}}(C\cup D)$ is complete, non Zeno, and $\sup_j\dom\phi=\infty$.
\end{itemize}
\end{proposition}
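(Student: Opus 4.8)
The plan is to establish (a) by verifying the hybrid basic conditions of \cite{goebel2012hybrid} one item at a time, and then to obtain (b) from the standard existence/completeness machinery for hybrid systems (\cite[Prop.~6.10]{goebel2012hybrid}), the only genuine work being the dwell-time estimate behind the non-Zeno property; note that (A.2) is needed only for (b), part (a) using solely (A.1). For (a): $C=[0,2\pi]^n$ is closed, and $D=\{x\in C\colon\max_{i\in\mathcal{V}}x_i\geq 2\pi\}$ is closed as the intersection of $C$ with the preimage of a closed set under the continuous map $x\mapsto\max_i x_i$; $f\equiv\1_n\omega$ is constant, hence continuous; each $g_i(x)$ in \eqref{eq:gi} is a nonempty finite set, so $G(x)\neq\emptyset$ on $D$, and $G(D)\subseteq[0,2\pi]^n$ by the standing assumption, so $G$ is locally bounded on $D$. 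The substantive point is outer semicontinuity, i.e.\ closedness of $\gph G$: I would take $(x_k,y_k)\to(x,y)$ with $x_k\in D$, $y_k\in G(x_k)$, note $x\in D$, and argue componentwise. For an index $i$ with $x_i<2\pi$ one has $x_{k,i}<2\pi$ eventually, a pigeonhole argument on the nonempty firing sets of the $x_k$ produces a fixed $j\neq i$ with $x_j=2\pi$, and $g_i(x_k)=x_{k,i}+Q(x_{k,i})\to x_i+Q(x_i)=g_i(x)$ by continuity of $Q$; for an index $i$ with $x_i=2\pi$ one has $y_{k,i}\in\{0,\,x_{k,i}+Q(x_{k,i})\}$ in every branch of \eqref{eq:gi}, hence $y_i\in\{0,\,x_i+Q(x_i)\}$, and this set equals $g_i(x)$ unless $x_j<2\pi$ for all $j\neq i$, in which case the firing set of $x_k$ is eventually $\{i\}$, forcing $y_{k,i}=0=y_i=g_i(x)$. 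Thus $y\in G(x)$.

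For the existence and completeness part of (b), I would check the viability condition: for $\xi\in C\setminus D$ every $\xi_i$ lies in $[0,2\pi)$, so the only constraints of $C$ active at $\xi$ are $x_i\geq 0$, and $f(\xi)=\omega\1_n$ with $\omega>0$ lies in the tangent cone to $C$ at $\xi$; together with (a) this yields a nontrivial solution from every $\xi\in C\cup D=C$. A maximal solution cannot escape in finite hybrid time, since flows satisfy $x(t,j)=x(0,j)+\omega t\,\1_n$ and stay in the bounded set $C$, any flow interval having length at most $(2\pi-\max_i x_i)/\omega$; and by the standing assumption $G(D)\subseteq C=C\cup D$, so no maximal solution ends by jumping outside $C\cup D$. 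Hence every $\phi\in\mathcal{S}_{\mathcal{H}}(C\cup D)$ is complete.

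For $\sup_j\dom\phi=\infty$ and the non-Zeno property: if a complete $\phi$ had $\sup_j\dom\phi=J<\infty$, then after the $J$-th jump it would flow for all remaining ordinary time (possibly a zero-length interval), but $\dot x=\omega\1_n$ with $\omega>0$ drives $\max_i x_i$ to $2\pi$ in finite time, placing the state in $D$, and since it cannot flow out of $C$ it is forced to jump again (possible since $G$ is nonempty on $D$), contradicting maximality of $J$; hence $\sup_j\dom\phi=\infty$. The non-Zeno property then follows from a uniform positive dwell-time between jumps: if one can exhibit $c^\star\in[0,2\pi)$ with $G(D)\subseteq\{x\in C\colon\max_i x_i\leq c^\star\}$, then after each jump the state lies in $C\setminus D$ and flows for at least $\tau^\star:=(2\pi-c^\star)/\omega>0$ before returning to $D$, which together with $\sup_j\dom\phi=\infty$ forces $\sup_t\dom\phi=\infty$.

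The main obstacle is precisely this last step: producing the constant $c^\star<2\pi$, equivalently showing $\sup_{z\in[0,2\pi]}(z+Q(z))<2\pi$. By continuity of $z\mapsto z+Q(z)$ on the compact interval $[0,2\pi]$ this is a pointwise strict bound $z+Q(z)<2\pi$, and it is here that (A.2) is essential: combined with the standing assumption it secures the bound at the firing value $z=2\pi$ (so a just-fired oscillator, reset to $0$ or to $2\pi+Q(2\pi)$, does not land back on $2\pi$), thereby precluding an unbounded chain of simultaneous firings. Quantifying this gap uniformly, and thus pinning down $c^\star$, is the heart of the proof; everything else is routine verification within the framework of \cite{goebel2012hybrid}.
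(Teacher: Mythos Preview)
Your treatment of (a) and of existence, completeness, and $\sup_j\dom\phi=\infty$ in (b) tracks the paper's proof closely; you simply spell out the outer semicontinuity of $G$ rather than deferring to \cite{philips2015}, and you argue $\sup_j\dom\phi=\infty$ by contradiction rather than via the bound $2\pi/\omega$ on flow intervals. The real difference is in the non-Zeno step. The paper does not build a dwell-time by hand: it observes that (A.2) gives $G(D)\cap D=\emptyset$ and then invokes \cite[Lemma~2.7]{sanfelice2007invariance}, which for precompact solutions of a system satisfying the hybrid basic conditions with $G(D)\cap D=\emptyset$ delivers a uniform positive lower bound on inter-jump times. That lemma packages exactly the estimate you are trying to construct directly, so the two routes are equivalent in spirit; the paper's is just shorter.

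Regarding the obstacle you flag: your instinct that the issue is not confined to $z=2\pi$ is correct. For $G(D)\cap D=\emptyset$ (equivalently, for your $c^\star<2\pi$ to exist) one also needs $z+Q(z)\neq 2\pi$ for every $z\in[0,2\pi)$, since a non-firing oscillator at phase $z$ lands at $z+Q(z)$. Assumption~\ref{as:Standing} together with (A.1)--(A.2) does \emph{not} guarantee this: the piecewise-linear $Q$ with $Q(0)=Q(\pi)=\pi$ and $Q(2\pi)=-\pi$ satisfies all hypotheses yet has $\pi+Q(\pi)=2\pi$, and for $n=3$ the initial state $(2\pi,\pi,0)$ yields a genuinely Zeno solution cycling $(2\pi,\pi,0)\to(0,2\pi,\pi)\to(\pi,0,2\pi)\to(2\pi,\pi,0)$. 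The paper asserts $G(D)\cap D=\emptyset$ directly from $Q(2\pi)\neq 0$ without addressing this case, so the gap you isolate is in fact shared by the paper's own argument; it closes once Assumption~\ref{as:AsQ} is also imposed, since then $z+Q(z)\leq z<2\pi$ for all $z<2\pi$.
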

\begin{proof}
Point (a) is straightforward to prove. Indeed, $C$ and $D$ are closed in $\R^n$ and $f$ is continuous on $C$. Moreover, the jump map $G$ is obviously bounded and nonempty on $D$ and, similarly as shown in \cite{philips2015}, is outer semicontinuous on $D$ since $Q$ is continuous\footnote{This latter property can be proven by directly using the definition of outer semicontinuity \cite[Definition 5.9]{goebel2012hybrid}.}. Hence, (a) holds.
To prove (b), observe that since $\mathcal{H}$ satisfies the hybrid basic conditions,  existence of nontrivial solutions from $C\cup D$  and completeness of each $\phi\in\mathcal{S}_{\mathcal{H}}(C\cup D)$ follows directly from the application of \footnote{In particular, it can be verified that item (VC) in \cite[Proposition 6.10]{goebel2012hybrid} holds and that due to   $G(D)\subset C\cup D$ and $C\cup D$ compact,  items (b) and (c) in \cite[Proposition 6.10]{goebel2012hybrid} are ruled out.}\cite[Proposition 6.10]{goebel2012hybrid}. Moreover,  $\sup_j\dom\phi=\infty$ since $\phi$ is complete and the length of each flow interval is at most $\frac{2\pi}{\omega}$. 
To prove that each $\phi\in \mathcal{S}_{\mathcal{H}}(C\cup D)$ is non Zeno, we make use of \cite[Lemma 2.7]{sanfelice2007invariance}. In particular, first observe that due to (A.2), since $Q(2\pi)\neq 0$, one has $G(D)\cap D=\emptyset$. Pick any $\phi\in \mathcal{S}_{\mathcal{H}}$ and let $E=\bigcup_{j\in\mathbb{N}_0}[t_j, t_{j+1}]\times \{j\}$ be the domain of $\phi$. Then, since every maximal solution to $\mathcal{H}$ is precompact\footnote{A solution $\phi$ is precompact if it is bounded and complete.} and, as shown earlier, $\mathcal{H}$ satisfies the hybrid basic conditions, thanks to \cite[Lemma 2.7]{sanfelice2007invariance}, it follows that there exists a positive real scalar $\gamma$ (dependent on the solution $\phi$) such that for each $j\geq 1$, $t_{j+1}-t_j\geq \gamma$ and this rules out the existence of Zeno solutions, concluding the proof.
\end{proof}
\begin{remark}
The fact that $\mathcal{H}$ satisfies the hybrid basic conditions entails two main advantages. The first advantage is that working with a hybrid system fulfilling the hybrid basic conditions allows to exploit a large numbers of results available in the literature of hybrid systems such as the invariance principle in \cite{sanfelice2007invariance}. The second advantage is that the satisfaction of the hybrid basic conditions ensures that $\mathcal{H}$ is structurally robust to small perturbations, this aspect will be illustrated in Section III.b.
\end{remark}
\subsection{Splay state stabilization}
In this section, we formalize the splay state stabilization problem as a set stabilization problem for $\mathcal{H}$.
In this paper, we consider the following notions of stability for the hybrid system $\mathcal{H}$:
\begin{definition}(\cite[Definition 7.1]{goebel2012hybrid})
Let ${\cal A}\subset\mathbb{R}^n$ be a compact set.  The set ${\cal A}$ is 
\begin{itemize}
\item stable for $\mathcal{H}$ if for every $\epsilon>0$ there exists $\delta>0$ such that every solution to $\mathcal{H}$ with $\vert\phi(0,0) \vert_{\mathcal{A}}\leq \delta$ satisfies $\vert\phi(t,j) \vert_{\mathcal{A}}\leq \epsilon$ for all $(t,j)\in\dom\phi$;
\item locally attractive for $\mathcal{H}$ if every maximal solution to $\mathcal{H}$ is complete and 
 there exists $\mu\!>\!0$ such that for every maximal solution $\phi$ to $\mathcal{H}$ with $\vert\phi(0,0)\vert_{\mathcal{A}}\leq \mu$ one has $\lim_{t+j\rightarrow\!\infty\!}\vert\phi(t,j)\vert_{\mathcal{A}}\!=\!0$
\item locally asymptotically stable (LAS) for $\mathcal{H}$, if it is both stable and locally attractive for $\mathcal{H}$.
\end{itemize}
\end{definition} 
\begin{definition}(\cite[Definition 7.3]{goebel2012hybrid})
Let ${\cal A}\subset\mathbb{R}^n$ be a compact set locally asymptotically stable for $\mathcal{H}$. The basin of attraction of $\mathcal{A}$, denoted by $\mathcal{B}_{\mathcal{A}}$, is the set of points such that every $\phi\in\mathcal{S}_{\mathcal{H}}(\mathcal{B}_{\mathcal{A}})$ is bounded, complete,  and $\lim_{t+j\rightarrow \infty}\vert\phi(t,j) \vert_{\mathcal{A}}=0$.
\end{definition} 
\begin{remark}
Observe that the basin of attraction $\mathcal{B}_{\mathcal{A}}$ always contains points in $\R^n \setminus (C\cup D)$ since no solution exists from such points, implying that the completeness, boundedness, and convergence requirements are vacuously verified. 
\end{remark}
\begin{definition}(Almost global asymptotic stability)
Let ${\cal A}\subset\mathbb{R}^n$ be a compact set asymptotically stable for $\mathcal{H}$. The set $\mathcal{A}$ is said to be almost globally asymptotically stable for $\mathcal{A}$ if $m(\R^n\setminus\mathcal{B}_{\mathcal{A}})=0$. 
\end{definition}
\begin{remark}
Almost global asymptotic stability ensures that convergence towards the attractor $\mathcal{A}$ occurs for almost all initial conditions. Due to its relevance in practical applications, this property has been extensively used in the study of synchronization problems; see \cite{dorfler2014synchronization}.
\end{remark}
To make use of the above stability definition, analogously to \cite{philips2015},  we show that the splay state configuration can be captured by defining an adequate compact set $\mathcal{A}\subset C$ such that whenever the state of $\mathcal{H}$ belongs to $\mathcal{A}$ then, the system is in the splay state configuration. To this aim, we first consider the following definition:
\begin{definition}(Splay state)
Let $\phi$ be a solution to $\mathcal{H}$ and let for each $(t,j)\in \dom\phi$, $\psi(t,j)\coloneqq \sort(\phi(t,j))$.
We say that  $\phi$ is a splay state solution if there exists a strictly positive constant $c$, such that for each $(t,j)\in\dom\phi$ and each $i\in\mathcal{V}$, 
one has $\dist(\psi_i(t,j), \psi_{\nex(i)}(t,j))=c$.
\end{definition}
In particular, it can be easily argued that the constant $c$ in the above definition has to be equal to $\frac{2\pi}{n}$ to guarantee even spacing between the phases.    
This observation leads to the following definition of the set $\mathcal{A}$ characterizing the splay state configuration.
\begin{equation}
\label{eq:SetA}
\mathcal{A}=\left\{x\in C\colon y=\sort(x),\dist(y_i,y_{\nex(i)})\!=\!\frac{2\pi}{n}\,\,\,\forall i\in\mathcal{V}\right\}
\end{equation}
Notice that $\mathcal{A}$ is compact since it is bounded (included in $C$ that is bounded) and closed since the function $\sort\colon\R^n\rightarrow \R^n$ is continuous\footnote{Continuity of the mapping $x\mapsto \sort(x)$ can be easily shown by noticing that for each $x\in\R^n$ and any sequence $x_k\rightarrow x$, one has that $\sort(x_k)\rightarrow \sort(x)$.}.
\begin{remark}
\label{rem:line}
Clearly, due to the difference between the splay state and the desynchronized behavior studied in \cite{philips2015}, the set $\mathcal{A}$ in \eqref{eq:SetA} is different from the desynchronization set defined in \cite{philips2015}. However,  similarly to \cite{philips2015}, the set $\mathcal{A}$ in \eqref{eq:SetA} also enjoys a simple geometric characterization. In particular, it can be easily shown that the set $\mathcal{A}$ corresponds to the union of $n!$ segments\footnote{The fact that the set $\mathcal{A}$ can be represented as the union of $n!$ segments can be easily seen from the definition of the set $\mathcal{A}$ by observing that: for each $i\in\mathcal{V}$, the constraint on the geodesic distance $\dist(\cdot)$ given in \eqref{eq:SetA} gives rise to a segment in the space $[0,\twopi]^n$; the $n$ constraints considered in \eqref{eq:SetA} can be described equivalently by considering all the orderings of the components of the vector $x$, which are $n!$. Thus, to determine such segments, for each of the considered orderings, it suffices to pick two points belonging to the set $\mathcal{A}$.} in $[0,\twopi]^n$, which in turn can be written as the intersection of a set of lines $\mathcal{L}$ in $\R^n$ with the box $[0,\twopi]^n$. 
Nonetheless, in contrast to \cite{philips2015}, in this paper this characterization is not useful to derive sufficient conditions for asymptotic stability of the set $\mathcal{A}$. This aspect is shown in Section III.
\end{remark}
As already mentioned earlier, one cannot expect to make the set $\mathcal{A}$ attractive for every initial condition $\xi\in C$. Indeed, let 
\begin{equation}
\label{eq:calX}
\mathcal{X}=\{x\in C\colon \exists i, j\in\mathcal{V}\,\,\mbox{s.t.}\,\,i\neq j ,\,\mbox{and}\,\, \dist(x_i,x_j)=0\}
\end{equation}
Then, for each $\xi\in\mathcal{X}$ and for any choice of the function $Q$, there exists a solution $\phi$ to $\mathcal{H}$ with $\phi(0,0)=\xi$ which has identically synchronized components, \ie, $\phi$ does not approach $\mathcal{A}$. On other other hand, since the set $\mathcal{X}$ has measure zero, 
we solve the following problem:
\begin{problem}
\label{prob:Main}
Design the function $Q$ such that the set $\mathcal{A}$ defined in \eqref{eq:SetA} is almost globally asymptotically stable for the hybrid system $\mathcal{H}$.
\end{problem}
\section{Main Results}
\subsection{Nominal behavior}
This first result relates the set $\mathcal{A}$ to another compact set whose role is key in the derivation of the main result presented in this paper.
\begin{proposition}
\label{prop:Main}
Let $\mathcal{A}_c=\left\{x\in[0,2\pi]^n\colon \gamma(x)=2\pi\frac{n-1}{n}\right\}$, then $\mathcal{A}=\mathcal{A}_c.$
\end{proposition}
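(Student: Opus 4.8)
The plan is to establish the set equality by proving the two inclusions $\mathcal{A}\subseteq\mathcal{A}_c$ and $\mathcal{A}_c\subseteq\mathcal{A}$ separately. For $x\in[0,2\pi]^n$ I would set $y=\sort(x)$ and work with the $n$ (point-free) arc-gaps between cyclically consecutive components, $d_i=y_{i+1}-y_i$ for $i<n$ and $d_n=2\pi-y_n+y_1$; these are nonnegative and satisfy $\sum_{i\in\mathcal{V}}d_i=2\pi$. The two facts I would lean on are: (i) $\dist(y_i,y_{\nex(i)})=\min(d_i,2\pi-d_i)$ for each $i$, so imposing $\dist(y_i,y_{\nex(i)})=c$ forces $d_i\in\{c,\,2\pi-c\}$; and (ii) the shortest arc containing the $n$ points is the complement of the largest gap, i.e.\ $\gamma(x)=2\pi-\max_{i\in\mathcal{V}}d_i$ — which, when $\max_i d_i\leq\pi$, reduces to the identity $\gamma(x)=2\pi-\max_i\dist(y_i,y_{\nex(i)})$ already used in the proof of Lemma~\ref{lemma:length}. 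I would also record that $\frac{2\pi}{n}\leq\pi$ for every $n\geq 2$.

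For $\mathcal{A}\subseteq\mathcal{A}_c$: given $x\in\mathcal{A}$, fact (i) with $c=\frac{2\pi}{n}$ gives $d_i\in\{\frac{2\pi}{n},\,2\pi-\frac{2\pi}{n}\}$; since for $n\geq2$ both candidate values are $\geq\frac{2\pi}{n}$, and $n$ nonnegative numbers each $\geq\frac{2\pi}{n}$ that sum to $2\pi$ must all equal $\frac{2\pi}{n}$, every gap equals $\frac{2\pi}{n}$. Then fact (ii) yields $\gamma(x)=2\pi-\frac{2\pi}{n}=2\pi\frac{n-1}{n}$, so $x\in\mathcal{A}_c$. For $\mathcal{A}_c\subseteq\mathcal{A}$: given $x\in\mathcal{A}_c$, fact (ii) gives $\max_i d_i=2\pi-\gamma(x)=\frac{2\pi}{n}$, hence every $d_i\leq\frac{2\pi}{n}$, and the same ``equal-split'' observation (now with the bound on the other side) forces all $d_i=\frac{2\pi}{n}$; since $\frac{2\pi}{n}\leq\pi$, fact (i) then gives $\dist(y_i,y_{\nex(i)})=\frac{2\pi}{n}$ for all $i\in\mathcal{V}$, i.e.\ $x\in\mathcal{A}$.

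The computations above are short, so the main obstacle is conceptual rather than technical: one has to see that the defining condition of $\mathcal{A}$ — a system of constraints on the \emph{geodesic} distances $\dist(y_i,y_{\nex(i)})$, each a minimum of two complementary arc lengths — is equivalent to a single constraint on the global quantity $\gamma(x)$, and that the bridge between the two is exactly the elementary fact that $n$ nonnegative reals summing to $2\pi$ and bounded on one side by $\frac{2\pi}{n}$ are all equal to $\frac{2\pi}{n}$. A related point of care is that the convenient formula $\gamma(x)=2\pi-\max_i\dist(y_i,y_{\nex(i)})$ is only valid once $\gamma(x)\geq\pi$ (as in Lemma~\ref{lemma:length}), which is why I would phrase everything in terms of the arc-gaps $d_i$ directly — there the relation $\gamma(x)=2\pi-\max_i d_i$ holds unconditionally — and only translate back to geodesic distances at the very end.
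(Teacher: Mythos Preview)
Your proof is correct. The overall strategy---prove both inclusions by relating the geodesic-distance constraints to the cyclic arc-gaps $d_i$ that sum to $2\pi$---is the same as the paper's, but your execution is more uniform and direct. For $\mathcal{A}\subseteq\mathcal{A}_c$ the paper normalizes to $0=x_1<\dots<x_n$ and computes $x_i=(i-1)\frac{2\pi}{n}$ explicitly; for $\mathcal{A}_c\subseteq\mathcal{A}$ it argues by contradiction, first invoking the identity $\gamma(x)=2\pi-\max_i\dist(x_i,x_{\nex(i)})$ (valid because $\gamma(x)\geq\pi$) and then retracing the steps of Lemma~\ref{lemma:length}. You instead use the single pigeonhole observation that $n$ nonnegative reals summing to $2\pi$, each bounded on one side by $\frac{2\pi}{n}$, must all equal $\frac{2\pi}{n}$---applied once with the lower bound $d_i\geq\frac{2\pi}{n}$ and once with the upper bound $d_i\leq\frac{2\pi}{n}$. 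This symmetrizes the two directions, avoids the contradiction argument, and by working with the unconditional identity $\gamma(x)=2\pi-\max_i d_i$ sidesteps the need to first check $\gamma(x)\geq\pi$ before translating back to geodesic distances.
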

\begin{proof}
First observe that for $n=2$ the result is trivial, so in the sequel we assume $n>2$.
We prove the above claim in two parts. First we show that $\mathcal{A}\subset\mathcal{A}_c$. 
The above set inclusion is trivial to show. In particular, pick $x=(x_1,x_2,\dots,x_n)\in\mathcal{A}$ and assume without loss of generality that $0=x_1<x_2<\dots <x_n$. Since for each $i\in\mathcal{V}$, 
$\dist(x_i,x_{\nex(i)})=\frac{2\pi}{n}$, one has $x_2=\frac{2\pi}{n}, x_3=\frac{4\pi}{n},\dots, x_n=2\pi\frac{n-1}{n}$. Hence, it can be easily argued that $\gamma(x)=\sum_{i\in\mathcal{V}\setminus\{n\}}\dist(x_i,x_{\nex(i)})=2\pi\frac{n-1}{n}$, showing that $\mathcal{A}\subset\mathcal{A}_c$.

 Now we show that $\mathcal{A}_c\subset\mathcal{A}$. Pick $x\in\mathcal{A}_c$, then $\gamma(x)=2\pi\frac{n-1}{n}>\pi$. Without loss of generality, assume $0=x_1\leq x_2\leq \dots \leq x_n$. Since $\gamma(x)\geq \pi$, as shown in \cite{nunez2015synchronization}, one has $\gamma(x)=2\pi-\underset{i\in\mathcal{V}}{\max} \dist(x_i,x_{\nex(i)})$, which, due to $x\in\mathcal{A}_c$, implies 
\begin{equation}
\label{eq:MaxD}
\underset{i\in\mathcal{V}}{\max} \dist(x_i,x_{\nex(i)})=\frac{2\pi}{n}
\end{equation}
Suppose by contradiction that $x\notin\mathcal{A}$. Then, there exists $i\in\mathcal{V}$ such that $\dist(x_i,x_{\nex(i)})\neq \frac{2\pi}{n}$. Assume without loss of generality $i=1$. Thanks to \eqref{eq:MaxD}, it has to be $\dist(x_1,x_2)<\frac{2\pi}{n}$, which yields $x_2\in[0,\frac{2\pi}{n})$. Therefore, since $\gamma(x)>\pi$, by following the same steps as in the proof of Lemma~\ref{lemma:length}, one can easily show that necessarily $x_n\in(\pi, 2\pi\frac{n-1}{n})$, \ie, $\dist(x_n,x_1)>\frac{2\pi}{n}$ contradicting \eqref{eq:MaxD}.
\end{proof}
The above result shows that the set $\mathcal{A}$ corresponds to the set in which the length of the shortest containing arc is maximized. This fact is crucial and allows to exploit the notion of containing arc, in a dual fashion with respect to \cite{nunez2016synchronization}, for the construction of a Lyapunov function candidate for $\mathcal{H}$.

Now, with the aim of solving Problem~\ref{prob:Main}, inspired by \cite{pagliari2010}, we consider the following assumption on the function $Q$.
\begin{assumption}
\label{as:AsQ}
The function $Q$  is such that,  the function $z\mapsto z+Q(z)$ is injective and $\gph Q\subset \mathcal{Q}_1\cup \mathcal{Q}_2$
where 
$$
\begin{aligned}
&\mathcal{Q}_1\hspace{-0.1cm}\coloneqq\hspace{-0.1cm}\left\{(w_1,w_2)\in\R^2\colon w_1\in \left[0,\frac{2\pi(n-1)}{n}\right] \land w_2=0\right\}\\
&\begin{split}\mathcal{Q}_2&\coloneqq\left\{(w_1,w_2)\in\R^2\colon\right. \\
&\left. w_1\in \left(\frac{2\pi(n-1)}{n}, 2\pi\right]\land\frac{2\pi(n-1)}{n}-w_1<w_2<0\right\}\end{split}
\end{aligned}
$$\hfill $\triangle$
\end{assumption}
\begin{remark}
Furthermore, notice that the satisfaction of Assumption~\ref{as:AsQ} guarantees the satisfaction of Assumption~\ref{as:Standing}, so the latter does not need to be checked.
\end{remark}
Essentially, the above assumption guarantees that at each firing, the oscillators whose phase belongs to $(\dstar,2\pi]$ decrease their phase without exiting $(\dstar,2\pi]$. Instead, the oscillators whose phase belongs to $[0,\dstar]$ are unaffected by the firing event. 
Moreover, having assumed $z\mapsto z+Q(z)$ to be injective guarantees that the firing of one oscillator does not make the phase of two or more oscillators jump to the same value. Indeed, in this case convergence towards $\mathcal{A}$ is impossible.   
This aspect and other important properties inherited by $\mathcal{H}$ from Assumption~\ref{as:As1} and Assumption~\ref{as:AsQ} are stated in the following result.
 \begin{lemma}
 \label{lemma:X}.
Let $\mathcal{X}$ be defined as in \eqref{eq:calX} and let $\mathcal{U}=C\setminus \mathcal{X}$. If Assumption~\ref{as:As1} and Assumption~\ref{as:AsQ} hold, then for every $\phi\in \mathcal{S}_{\mathcal{H}}(\mathcal{U})$ one has $\range\phi\subset \mathcal{U}$.
\end{lemma}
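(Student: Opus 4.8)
The plan is to show that the set $\mathcal{U}$ is forward invariant for $\mathcal{H}$, checking separately that it is preserved along flows and along jumps. Throughout write $\Phi(z)\coloneqq z+Q(z)$, recall that $\mathcal{U}=\{x\in C\colon \dist(x_i,x_j)>0\ \forall i\neq j\}$, and note that since $[0,2\pi)\ni z\mapsto e^{\iota z}$ is injective, two numbers in $[0,2\pi)$ have zero geodesic distance if and only if they are equal. Fixing $\phi\in\mathcal{S}_{\mathcal{H}}(\mathcal{U})$ with $\dom\phi=\bigcup_{j}([t_j,t_{j+1}]\times\{j\})$, I would prove by induction on $j$ that $\phi(t,j)\in\mathcal{U}$ for every $(t,j)\in\dom\phi$; the base case $\phi(0,0)\in\mathcal{U}$ holds by hypothesis.

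For the flow step, on each interval $[t_j,t_{j+1}]$ one has $\phi(t,j)=\phi(t_j,j)+\1_n\omega(t-t_j)\in C$, so every pairwise difference $\phi_i(t,j)-\phi_k(t,j)$ is constant in $t$; hence $\dist(\phi_i(t,j),\phi_k(t,j))=\dist(\phi_i(t_j,j),\phi_k(t_j,j))$, which is positive by the inductive hypothesis, so $\phi(t,j)\in\mathcal{U}$. The jump step is the crux. Suppose $\xi\coloneqq\phi(t_{j+1},j)\in\mathcal{U}\cap D$ — it lies in $\mathcal{U}$ by the flow step, and in $D$ because a jump occurs. Since its components are pairwise distinct on $\Sone$, they are a fortiori pairwise distinct as reals, so at most one of them equals $2\pi$; as $\max_{i\in\mathcal{V}}\xi_i=2\pi$, exactly one does, say $\xi_k=2\pi$, and moreover $\xi_i\in(0,2\pi)$ for every $i\neq k$, because $\dist(\xi_i,\xi_k)>0$ forbids $\xi_i\in\{0,2\pi\}$. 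Hence only the first two cases of \eqref{eq:gi} are active, $G(\xi)$ is the single point $y$ with $y_k=0$ and $y_i=\Phi(\xi_i)$ for $i\neq k$, and $\phi(t_{j+1},j+1)=y$.

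It then remains to show $y\in\mathcal{U}$. By Assumption~\ref{as:AsQ}, $\gph Q\subset\mathcal{Q}_1\cup\mathcal{Q}_2$ forces $Q(z)=0$ for $z\in[0,\dstar]$ and $\dstar-z<Q(z)<0$ for $z\in(\dstar,2\pi]$; hence $\Phi$ is the identity on $[0,\dstar]$ and maps $(\dstar,2\pi]$ into $(\dstar,2\pi)$, so $\range\Phi\subset[0,2\pi)$. Therefore all components of $y$ lie in $[0,2\pi)$ and it suffices to verify that they are pairwise distinct as reals. For $i,i'\neq k$ with $i\neq i'$ one has $\xi_i\neq\xi_{i'}$, hence $\Phi(\xi_i)\neq\Phi(\xi_{i'})$ by the injectivity of $\Phi$ granted in Assumption~\ref{as:AsQ}; and for $i\neq k$ one has $\Phi(\xi_i)\neq 0=y_k$, since $\xi_i\in(0,2\pi)$ gives $\Phi(\xi_i)=\xi_i>0$ when $\xi_i\le\dstar$ and $\Phi(\xi_i)>\dstar>0$ when $\xi_i>\dstar$. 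Thus $y\in\mathcal{U}$, which closes the induction.

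The only genuine difficulty is the jump analysis, and within it the observation that the oscillator that resets to $0$ cannot land on top of another oscillator: ruling this out needs both that $\Phi$ is the identity on $[0,\dstar]$ — so the phases there, which are strictly positive because their owner is not the firing oscillator, remain strictly positive — and that $\Phi$ pushes $(\dstar,2\pi]$ strictly above $\dstar$; combined with the injectivity of $\Phi$, this excludes every possible post-jump collision. Everything else (distance preservation under flow, the bookkeeping of which case of \eqref{eq:gi} applies) is routine.
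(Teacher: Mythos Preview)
Your proof is correct and follows essentially the same approach as the paper's: both argue that the flow preserves pairwise geodesic distances, then analyze the jump from $\mathcal{U}\cap D$ by noting that exactly one oscillator fires, use injectivity of $\Phi$ to keep the non-firing phases distinct, and check via the structure of $\mathcal{Q}_1\cup\mathcal{Q}_2$ that no post-jump phase lands on $0$ or $2\pi$. Your exposition is slightly more explicit about the induction structure and the reduction to real-number distinctness in $[0,2\pi)$, but the argument is the same.
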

\begin{proof}
Due to the definition of the flow dynamics, solution cannot exit the set $\mathcal{U}$ by flowing. Therefore, one needs to rule out the existence of solutions entering the set $\mathcal{X}$ via a jump. To this end, pick a solution $\phi\in\mathcal{S}_{\mathcal{H}}(\mathcal{U})$ and denote $(t,k)\mapsto \phi(t,k)\coloneqq(\phi_1(t,k), \phi_2(t,k),\dots, \phi_n(t,k))$. Assume that $(t,k)\in\dom \phi$ is such that $(t,k+1)\in\dom\phi$ and that $\phi(t,k)\in\mathcal{U}$. Since, $\phi(t,k)\in\mathcal{U}$, it follows that for each $i\neq j \in\mathcal{V}$ one has $\dist (\phi_i(t,k),\phi_j(t,k))>0$. Let $\bar{i}\in\mathcal{V}$ such that $\phi_{\bar{i}}(t,k)=2\pi$, then $\phi_{\bar{i}}(t,k+1)=0$. While, for each $i\neq \bar{i}\in\mathcal{V}$, $\phi_i(t,k+1)=\phi_i(t,k)+Q(\phi_i(t,k))$. Thus, since $\phi(t,k)\in\mathcal{U}$ and $z\mapsto z+Q(z)$ is injective, for each $i\neq j\in\mathcal{V}\setminus\{\bar{i}\}$ $\phi_i(t,k+1)\neq \phi_j(t,k+1)$. To conclude the proof, one needs to show that for each $i\in\mathcal{V}\setminus\{\bar{i}\}$ $\phi_i(t,k+1)\notin\{0,\twopi\}$. But this is verified. Indeed, since $\phi(t,j)\in\mathcal{U}$, either $\phi_i(t,k)\in(0,\dstar]$ or $\phi_i(t,k)\in(\dstar, 2\pi)$. In the first case, $\phi_i(t,k+1)=\phi_i(t,k)\notin\{0,\twopi\}$, while in the second case
$\phi_i(t,k+1)\in[\phi_i(t,k)-\frac{2\pi}{n}, \phi_i(t,k))\subset (\dstar, 2\pi)$, concluding the proof.
\end{proof}
\begin{remark}
It is worthwhile to note that the proof of the above claim strongly relies on the fact that
the oscillators are connected through an all-to-all network.
Indeed, when such an assumption cannot be fulfilled, even if the function $z\mapsto z+Q(z)$ is injective, maximal solutions to $\Hy$ may leave the set $\mathcal{U}$ in finite time preventing from approaching $\mathcal{A}$. This fact makes the solution of the splay state stabilization problem for general topologies very challenging and unachievable with the simple interaction mechanism considered in this paper. Such an aspect is acknowledged by the existing literature on PCOs such as \cite{degesys2008towards}.  
\end{remark}
Now we give the main result of this paper.
\begin{theorem}
\label{theorem:Main}
Let Assumption~\ref{as:As1} and Assumption~\ref{as:AsQ} hold. Then, the set $\mathcal{A}$ defined in \eqref{eq:SetA} is 
almost globally asymptotically stable for $\mathcal{H}$ and the basin of attraction of $\mathcal{A}$ includes $\mathcal{U}$.
\end{theorem}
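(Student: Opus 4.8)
The plan is to prove Theorem~\ref{theorem:Main} by a Lyapunov/invariance argument centered on the function $V\colon C\to\R_{\geq0}$, $V(x)\coloneqq\dstar-\gamma(x)$. By Lemma~\ref{lemma:length} $V\geq0$ on $C$, and by Proposition~\ref{prop:Main} $V(x)=0$ if and only if $x\in\mathcal{A}$; moreover $V$ is continuous, since $\gamma(x)$ equals $2\pi$ minus the largest among the (cyclically consecutive) gaps of $\sort(x)$ on $\Sone$, which depends continuously on $x$. First I would show that $V$ is nonincreasing along every $\phi\in\mathcal{S}_{\mathcal{H}}(\mathcal{U})$. Along flows this is immediate: $f(x)=\1_n\omega$ rotates all phases rigidly on $\Sone$, preserving all pairwise geodesic distances and hence $\gamma$, so $V$ is constant on flow intervals.

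For jumps, Lemma~\ref{lemma:X} keeps $\phi$ in $\mathcal{U}$, so on $D\cap\mathcal{U}$ exactly one oscillator sits at $\twopi$ and $G$ is single valued there. After such a jump the firing oscillator returns to $0\equiv\twopi$, every oscillator with phase in $[0,\dstar]$ is unchanged (as $Q\equiv0$ on $[0,\dstar]$ by Assumption~\ref{as:AsQ}), and every oscillator with phase in $(\dstar,\twopi)$ moves strictly toward $\dstar$ while remaining in $(\dstar,\twopi)$ (since $w_1+w_2>\dstar$ on $\mathcal{Q}_2$). A short case analysis on the $n$ gaps then gives $\gamma(G(x))\geq\gamma(x)$: the gaps between unmoved ``low'' oscillators are unchanged; the gap between the largest unmoved oscillator and the smallest moved one shrinks; and the gaps lying inside $(\dstar,\twopi)$, together with the gap from the largest moved oscillator up to $\twopi$, are each strictly smaller than $\frac{2\pi}{n}$, which by Lemma~\ref{lemma:length} lower-bounds the largest old gap. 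Hence $V(G(x))\leq V(x)$ on $D\cap\mathcal{U}$.

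Next I would invoke the hybrid invariance principle~\cite{sanfelice2007invariance} for $\mathcal{H}$ restricted to the (forward invariant) set $\mathcal{U}$: any $\phi\in\mathcal{S}_{\mathcal{H}}(\mathcal{U})$ is precompact (bounded in the compact set $C$, complete by Proposition~\ref{prop:Prop1}) and, since the flow-change of $V$ vanishes and its jump-change is nonpositive, $V(\phi(t,j))\to r\geq0$ and $\phi$ approaches the largest weakly invariant subset $M$ of $V^{-1}(r)\cap\mathcal{U}$. Attractivity of $\mathcal{A}$ from $\mathcal{U}$ then reduces to showing $M\subset\mathcal{A}$, that is $r=0$: no complete solution remains in $V^{-1}(r)\cap\mathcal{U}$ with $r>0$. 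This is the main obstacle. The idea: since $z\mapsto z+Q(z)$ is injective and the coupling is all-to-all, the oscillators keep a fixed cyclic order and fire cyclically; a firing whose immediately-preceding oscillator lies in $[0,\dstar]$ leaves the multiset of the $n$ gaps invariant (merely relabeling it cyclically), whereas a firing whose immediately-preceding oscillator lies in $(\dstar,\twopi)$ strictly enlarges one gap (kept below $\frac{2\pi}{n}$ by $\mathcal{Q}_2$) and strictly shrinks another. If $r>0$ then the gaps are not all equal to $\frac{2\pi}{n}$, so some gap is $<\frac{2\pi}{n}$; tracking the cyclic evolution of the gaps over one firing round forces a gap-changing firing and, pushed through, a strict increase of $\gamma$ along $\phi$, contradicting $\phi\subset V^{-1}(r)$. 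Making this decrease effective over a firing cycle — where the all-to-all structure and the precise geometry of $\mathcal{Q}_1\cup\mathcal{Q}_2$ are essential — is the delicate part of the argument.

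With $r=0$, every $\phi\in\mathcal{S}_{\mathcal{H}}(\mathcal{U})$ converges to $\mathcal{A}$ and is bounded and complete, so $\mathcal{U}\subset\mathcal{B}_{\mathcal{A}}$; since no solution starts in $\R^n\setminus C$, also $\R^n\setminus C\subset\mathcal{B}_{\mathcal{A}}$, whence $\R^n\setminus\mathcal{B}_{\mathcal{A}}\subset C\setminus\mathcal{U}=\mathcal{X}$, which has Lebesgue measure zero. For stability, fix a neighborhood $N$ of the compact set $\mathcal{A}$ with $N\cap\mathcal{X}=\emptyset$ (possible because $\mathcal{A}\cap\mathcal{X}=\emptyset$ and $\mathcal{X}$ is closed), so that $V$ is nonincreasing along any solution started in $N\cap C\subset\mathcal{U}$; given $\varepsilon>0$, set $c\coloneqq\min\{V(x)\colon x\in C,\ \vert x\vert_{\mathcal{A}}\geq\varepsilon\}>0$ and choose $\delta>0$ so small that $\vert x\vert_{\mathcal{A}}\leq\delta$ implies both $x\in N$ and $V(x)<c$; then $\vert\phi(0,0)\vert_{\mathcal{A}}\leq\delta$ forces $V(\phi(t,j))<c$, hence $\vert\phi(t,j)\vert_{\mathcal{A}}<\varepsilon$, for all $(t,j)\in\dom\phi$. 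Thus $\mathcal{A}$ is stable and locally attractive, hence asymptotically stable, and by the measure estimate almost globally asymptotically stable, with basin of attraction containing $\mathcal{U}$.
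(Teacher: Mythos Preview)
Your proposal is correct and follows essentially the same approach as the paper: the same Lyapunov function $V(x)=\dstar-\gamma(x)$, the same flow-invariance and jump-nonincrease structure, and the same invariance-principle argument to rule out nontrivial weakly invariant subsets of $V^{-1}(r)\cap\mathcal{U}$ for $r>0$. The only differences are cosmetic---your jump analysis is organized around the cyclic gaps rather than the position of the shortest containing arc $\alpha(x)$ relative to $\Theta=\Psi([\dstar,2\pi])$ (the paper's three cases), and you give a direct $\varepsilon$--$\delta$ stability argument where the paper invokes \cite[Theorem~8.8]{goebel2012hybrid}.
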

\begin{proof}
For every $x\in \R^n$, define the following Lyapunov function candidate $V(x)=\dstar-\gamma(x)$.
Observe that $V$ is continuous in $\R^n$ (see \cite{Ferrante2016}) and thanks to Lemma~\ref{lemma:length} and Proposition~\ref{prop:Main} positive definite with respect to $\mathcal{A}$ on $C\cup D$. Define, $u_C(x)=0$ for each $x\in C$ and $u_C(x)=-\infty$ otherwise;  $u_D(x)=\max_{g\in G(x)} \{V(g)-V(x)\}$  for each $x\in D$ and $u_D(x)=-\infty$ otherwise.
Observe that since during flows there is no interaction among the oscillators, due to the definition of the flow map, $V$ is constant during flows, hence the growth of $V$ is bounded by $u_C, u_D$; see \cite{goebel2012hybrid}. Let, $U=\R^n\setminus \mathcal{X}$, we want to show that for each $x\in U$, one has that $u_D(x)\leq 0$. To this end, one needs to show that $\gamma(x)$ 
is non decreasing across jumps from $x\in U\cap D$. That is, one needs to prove that for each $x\in U\cap D, g\in G(x)$,  $\gamma(g)-\gamma(x)\geq 0$.
Let $\Theta=\Psi([\dstar,2\pi])$ and for each $x\in [0,2\pi]^n$ let $\alpha(x)\subset\Sone$ be the shortest arc containing $x$. Three different situations need to be analyzed; see \figurename~\ref{fig:TheoremMain}. Since $G(x)$ is single-valued for $x\in U\cap D$, we denote $g(x)$ as the unique element of $G(x)$. 

[Case 1 ]: $\alpha(x)\subset \Theta$; \figurename~\ref{fig:TheoremMain} (a).
Pick $x\in U\cap D$, and observe that necessarily for each $i\in\mathcal{V}$, $x_i\in[\dstar,\twopi]$ and that there exits a unique $k\in\mathcal{V}$ such that $x_i=2\pi$.  
Then due to Assumption~\ref{as:AsQ} and due to the definition of $G$, one has that $g_k(x)=0$ and for each $l\neq k\in\mathcal{V}$, $g_l(x)\leq 0$. Hence, it can be argued that $\gamma(g(x))\geq \gamma(x)$. 

[Case 2]: $\alpha(x)\supset \Theta$; \figurename~\ref{fig:TheoremMain} (b).
Pick $x\in U\cap D$, and observe that in this case $\gamma(g(x))=\gamma(x).$

[Case 3]: $\alpha(x)$ has an endpoint $h$ in $\Theta$; \figurename~\ref{fig:TheoremMain} (c).
Pick $x\in U\cap D$, and notice that either $\alpha(x)$ is the counterclockwise arc connecting $h$ to the other endpoint or $\alpha(x)\subset\Theta$, being the latter already discussed above.  
Then due to Assumption~\ref{as:AsQ} and due to the definition of $G$, one has that there exists $k\in\mathcal{V}$, such that $g_k(x)=0$. Moreover, for each $l\neq k\in\mathcal{V}$ such that $x_l\in [0,\dstar]$, one has $g_l(x)=0$, while for each $s\neq k\in\mathcal{V}$ such that $x_s\in (\dstar,2\pi]$, one has $g_s(x)<0$. Thus, it can be argued that $\gamma(g(x))\geq \gamma(x)$. 
Therefore, for each $x\in U$ one has that $u_D(x)\leq 0$. Hence, since $U$ is a neighborhood of $\mathcal{A}$, $V$ is continuous and positive definite with respect to $\mathcal{A}$ on $C\cup D$, $G(D)\subset C\cup D$, and $\mathcal{H}$ satisfies the hybrid basic conditions, by \cite[Theorem 8.8]{goebel2012hybrid} it follows that $\mathcal{A}$ is stable for $\mathcal{H}$. 
Now, we show that $\mathcal{A}$ is locally attractive for $\mathcal{H}$ implying that such a set is LAS. To this end, we make use of \cite[Theorem 8.8 (b)]{goebel2012hybrid}. In particular, we want to show that for each $r\in(0, \dstar]$ the largest weakly invariant\footnote{A set $S\subset\R^n$ is said to weakly invariant if for every $\xi\in S$, $\tau>0$, there exists at least one complete solution $\phi\in\mathcal{S}_{\mathcal{H}}(S)$ with $\range\phi\subset S$ (weak forward invariance) such that for some $(t^\star, j^\star)\in\dom\phi$ with $t^\star+j^\star\geq \tau$ one has $\phi(t^\star, j^\star)=\xi$ (weak backward invariance); see \cite{goebel2012hybrid} for further details on weak invariance in hybrid systems.} subset contained in $V^{-1}(r)\cap U\cap (u_C^{-1}(0)\cup (u_D^{-1}(0)\cap G(u_D^{-1}(0)))=V^{-1}(r)\cap U\cap C=V^{-1}(r)\cap \mathcal{U}$ is empty.
 To this end, it suffices to show that for every $r\in(0,\dstar]$, every $\phi\in\mathcal{S}_{\mathcal{H}}(V^{-1}(r)\cap \mathcal{U})$ leaves the set $V^{-1}(r)$. Pick any $\phi\in\mathcal{S}_{\mathcal{H}}(V^{-1}(r)\cap \mathcal{U})$, without loss of generality, suppose that $\phi(0,0)\in D$, and for each $(t,j)\in\dom\phi$, define $\psi(t,j)\colon\sort(\phi(t,j))$. Notice that, thanks to Assumption~\ref{as:AsQ},  if $\alpha(\xi)$ has an endpoint in $\Psi((\dstar,2\pi])$, then $V(\phi(0,1))<V(\phi(0,0))=r$. Hence, the only case to analyze is when the arc $\alpha(\phi(0,0))$ has an endpoint $h$ belonging to $\Sone\setminus\Psi((\dstar,2\pi])$.
Now observe that, since $r\in(0,\dstar]$, $\phi(0,0)\notin\mathcal{X}$, during flows the geodesic distance between oscillators' phase is unchanged and, thanks to Assumption~\ref{as:AsQ}, at jumps it never increases more than $\frac{\twopi}{n}$, it follows that for each $(t,j)\in\dom\phi$, there exists $i\in\mathcal{V}$ such that 
$\dist(\psi(t,j)_i(t,j),\psi(t,j)_{\nex(i)}(t,j))<\frac{\twopi}{n}$. 
Therefore, since from Assumption~\ref{as:AsQ}  at each jump all oscillators whose phase is in $(\dstar,2\pi)$ decrease their phases and the endpoint $h$ of $\alpha(\phi(t,j))$ by assumption does not belong to $\Psi((\dstar,2\pi))$,   
it can be easily observed that for each $(t,j)\in\dom\phi$, there exist $(t_s, j_s), (t_s, j_s+1)\in \dom\phi$ such that 
$\alpha(\phi(t_s, j_s))$ has an endpoint in $\Psi((\dstar,2\pi))$, implying that $\gamma(\phi(t_s, j_s+1))>\gamma(\phi(t_s, j_s))=r$.
 Hence, according to \cite[Theorem 8.8]{goebel2012hybrid}, $\mathcal{A}$ is locally asymptotically stable for $\mathcal{H}$. To conclude the proof, we need to show that the basin of attraction of $\mathcal{A}$ includes $\mathcal{U}$. Indeed, in that case one has 
 $\mathcal{B}_{\mathcal{A}}=(\R^n\setminus(\mathcal{U}\cup\mathcal{X}))\cup \mathcal{U}=\R^n\setminus\mathcal{X}$, which guarantees that $\mathcal{A}$ is almost globally asymptotically stable since $\R^n\setminus\mathcal{B}_{\mathcal{A}}=\mathcal{X}$ and \footnote{The set $\mathcal{X}$ is the union of a finite number of isolated points and affine subspaces of $\R^n$ of dimension $n-1$, \ie, sets of Lebesgue measure zero. Therefore, by countable subadditivity of the Lebesgue measure, one has $m(\mathcal{X})=0$; see \cite{schilling2005measures}.} $m(\mathcal{X})=0$.
To this end, from \cite[Theorem 8.2]{goebel2012hybrid}, one that each $\phi\in\mathcal{S}_{\mathcal{H}}(\mathcal{U})$ converges\footnote{Maximal solutions to $\mathcal{H}$ are complete, bounded (precompact) and do not leave $C$ due to $C\cup D\cup G(D)=C$ compact.}, for some $r\in V(C)$,  to the largest and nonempty weakly invariant subset of $V^{-1}(r)\cap C$. Then since we shown that for each $r\in(0,\dstar]$,  any $\phi\in\mathcal{S}_{\mathcal{H}}(\mathcal{U})$ leaves  $V^{-1}(r)$, if follows that each $\phi\in\mathcal{S}_{\mathcal{H}}(\mathcal{U})$ approaches $\mathcal{A}$ and this finishes the proof.
\end{proof}
\begin{figure}[h!]
\centering
 \subfigure[]
   {\includegraphics[scale=0.32]{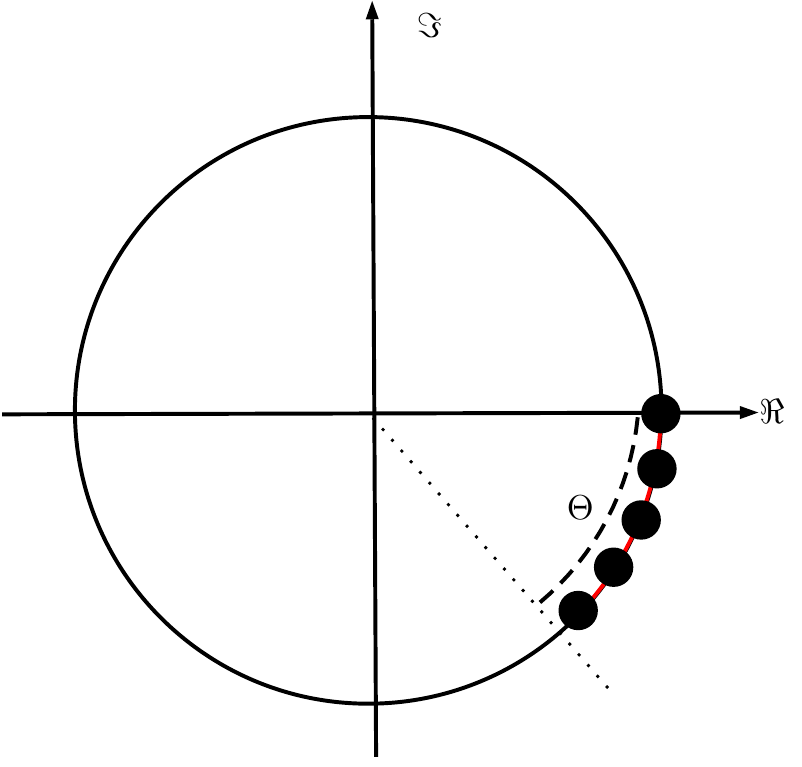}}\,
 \subfigure[]
   {\includegraphics[scale=0.32]{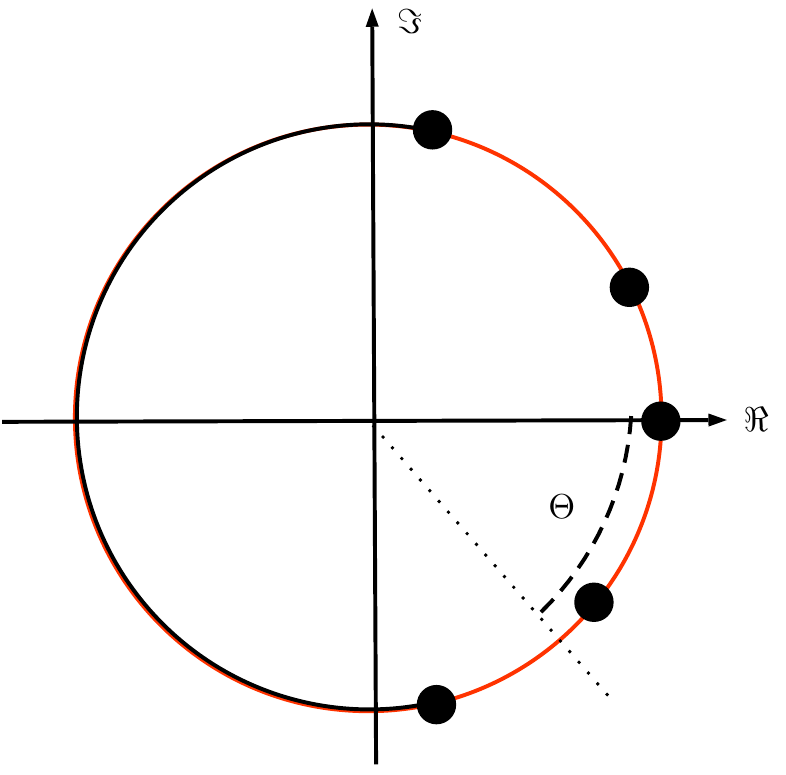}}\,
 \subfigure[]
   {\includegraphics[scale=0.32]{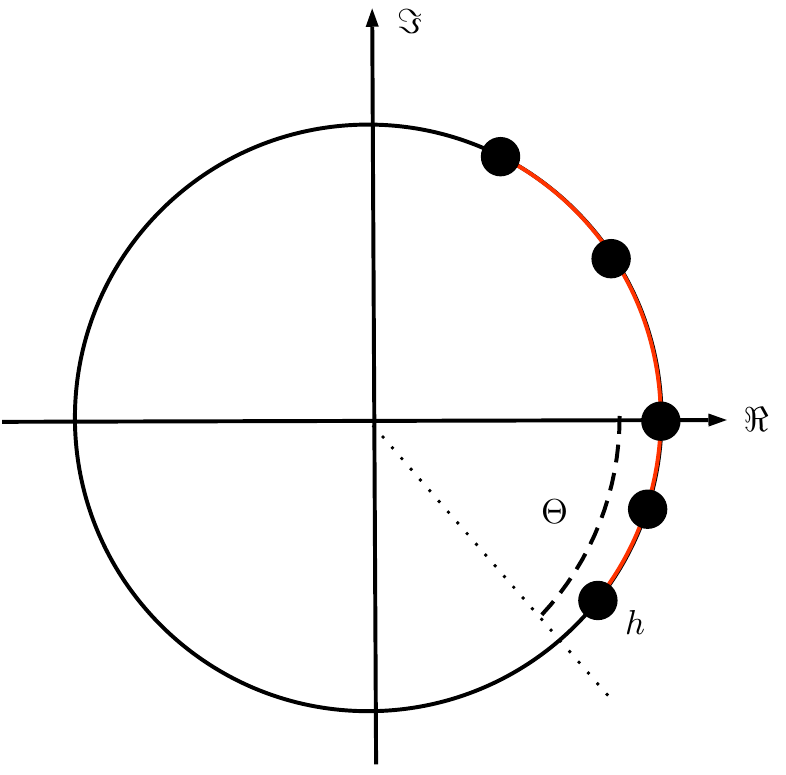}}
\caption{{\small The different cases discussed in the proof of Theorem~\ref{theorem:Main}. (a) Case 1, (b) Case 2, (c) Case 3, $\alpha(x)$ (red), $\Theta$ (dashed-black).}}
 \label{fig:TheoremMain}
 \end{figure}
 \begin{remark}
The above result makes use of the notion of the shortest containing arc to ensure almost global asymptotic stability of the set $\mathcal{A}$.  
Nonetheless, due to the geometric characterization of the set $\mathcal{A}$ illustrated in Remark~\ref{rem:line}, it may be tempting to assume that the Lyapunov analysis proposed in \cite{philips2015} building on the function $x\mapsto \widetilde{V}(x)=\vert x \vert_{\mathcal{L}}$ (where the set $\mathcal{L}$ is defined in Remark~\eqref{rem:line}) 
can be adapted to deal with the setting considered in this paper. To this end, it is worthwhile to observe that in \cite{philips2015}, to assess asymptotic stability of the desynchronization set via the adoption of the proposed distance-like function, the authors make use of the inherent relationships between the desynchronization set and the considered phase response function. However, due to different phase response function and set to be stabilized, such relationships do not hold in our setting. Thus, the arguments in \cite{philips2015} cannot be directly employed to assess asymptotic stability of the set 
$\mathcal{A}$ via the use of the proposed distance-like function.
In fact, numerical experiments show that such a distance-like function, although approaching asymptotically zero,  may increase along the solutions of the hybrid system $\mathcal{H}$ when the phase response function is designed according to the prescriptions given by Assumption~\ref{as:As1} and Assumption~\ref{as:AsQ}, making such a function unsuitable as a Lyapunov candidate in our setting. This aspect is made evident in Section IV through numerical simulations.
 \end{remark}
\subsection{Robustness analysis}
In real-world settings, due to parameter variations or constructive imperfections, assuming identical natural frequency $\omega$ for each oscillator is unrealistic. Therefore, having an insight on the behavior of the hybrid system $\mathcal{H}$ in the presence of small perturbations on the oscillators' natural frequency is an interesting aspect. To delve into this issue, consider the following hybrid model of the network of PCOs with frequency perturbations
\begin{equation}
\label{eq:Pert}
\mathcal{H}_d\left\{
\begin{array}{ll}
\begin{array}{rcl}
\dot{x}& =& \omega\1_n+d\\
\end{array}& x\in C\\
\begin{array}{rcl}
x^{+} &\in&G(x)\\
\end{array}&x\in D
\end{array}
\right.
\end{equation}
where $d\in\R^n$ represents the perturbation affecting the flow dynamics. To formally describe the deviation of $\mathcal{H}_d$ from the nominal behavior captured by $\mathcal{H}$, we make use of the notion of  $(\tau,\varepsilon)$-closeness of hybrid arcs given in Section~\ref{sec:hybrid}. In particular, by relying on the regularity of the data of 
$\mathcal{H}$, we can establish the following result.\footnote{\textcolor{blue}{In the published version, there is a typo in the proof of Proposition 3. Namely, $\mathcal{H}_{\rho}$ should be $\mathcal{H}^\prime_{\rho}$ to avoid confusion with $\mathcal{H}_d$ in \eqref{eq:Pert}.}}
\begin{proposition}
For each $\varepsilon>0$ and $\tau\geq 0$, there exists $\delta>0$ such that for each measurable function $d\colon\R_{\geq 0}\rightarrow\delta\mathbb{B}$ and for each solution\footnote{Given a measurable function $d\colon\R_{\geq 0}\rightarrow\R^n$, $\phi_d$ is a solution to $\mathcal{H}_d$ if $(\phi, d')$ with $d'(t,j)=d(t)$ for each $(t,j)\in \dom\phi$ is a solution pair to $\mathcal{H}_d$; see \cite{cai2009characterizations} for further details.} $\phi_d$ to \eqref{eq:Pert} from $C$, there exists a solution $\phi$ to $\mathcal{H}$ from $C$ such that $\phi$ and $\phi_d$ are $(\tau,\varepsilon)$-close.
\end{proposition}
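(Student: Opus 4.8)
The plan is to view $\mathcal{H}_d$, for $\|d\|_\infty$ small, as an outer (inflated) perturbation of the nominal hybrid system $\mathcal{H}$, and then to invoke the structural robustness that every hybrid system satisfying the hybrid basic conditions enjoys. The starting point is Proposition~\ref{prop:Prop1}(a): under Assumption~\ref{as:As1}, $\mathcal{H}=(C,f,D,G)$ satisfies the hybrid basic conditions, hence it is (nominally) well posed in the sense of \cite{goebel2012hybrid}, so that its solutions depend outer semicontinuously on vanishing perturbations of the data.

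First I would recall the standard $\rho$-inflation $\mathcal{H}'_\rho=(C_\rho,f_\rho,D_\rho,G_\rho)$ of $\mathcal{H}$, namely $C_\rho=\{x\colon(x+\rho\mathbb{B})\cap C\neq\emptyset\}$, $f_\rho(x)=\overline{\operatorname{co}}\,f\big((x+\rho\mathbb{B})\cap C\big)+\rho\mathbb{B}$, $D_\rho=\{x\colon(x+\rho\mathbb{B})\cap D\neq\emptyset\}$, and $G_\rho(x)=\{v\colon v\in g+\rho\mathbb{B},\ g\in G((x+\rho\mathbb{B})\cap D)\}$. Then I would observe that, given $\delta>0$, any measurable $d\colon\R_{\geq0}\to\delta\mathbb{B}$, and any solution $\phi_d$ to $\mathcal{H}_d$ from $C$, the hybrid arc $\phi_d$ is itself a solution to $\mathcal{H}'_\delta$: the jump data of $\mathcal{H}_d$ are literally those of $\mathcal{H}$, so $D\subset D_\delta$ and $G(x)\subset G_\delta(x)$ hold trivially, while during flow one has $\dot\phi_d(t)=\omega\1_n+d(t)\in\{\omega\1_n\}+\delta\mathbb{B}=f(\phi_d(t))+\delta\mathbb{B}\subset f_\delta(\phi_d(t))$ (using that $f$ is constant on $C$, so no enlargement of its argument is needed), together with $C\subset C_\delta$. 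The disturbance is thus completely absorbed into the inflation level; the solution-pair formalism of \cite{cai2009characterizations} is needed only to make sense of ``$\phi_d$ solves $\mathcal{H}_d$''.

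The second step is to apply the $(\tau,\varepsilon)$-closeness theorem for solutions of well-posed hybrid systems under vanishing inflation \cite{goebel2012hybrid}: for the compact set $C$ and the prescribed $\tau$ and $\varepsilon$, there is $\delta>0$ such that every solution of $\mathcal{H}'_\delta$ starting in $C$ is $(\tau,\varepsilon)$-close to some solution of $\mathcal{H}$ starting in $C$; combined with the first step this is exactly the claim. If a ready-made reference is not available in the desired form, the same conclusion follows by a compactness-and-contradiction argument: negating the statement produces $\delta_k\downarrow 0$, disturbances $d_k\colon\R_{\geq0}\to\delta_k\mathbb{B}$, and solutions $\phi_{d_k}$ of $\mathcal{H}_{d_k}$ from $C$, none of which is $(\tau,\varepsilon)$-close to any solution of $\mathcal{H}$ from $C$; restricted to hybrid times with $t+j\leq\tau$ these arcs (being solutions of $\mathcal{H}_{d_k}$ with $\delta_k<\omega$) stay in the compact set $C$ and undergo a number of jumps bounded uniformly in $k$, so by the sequential compactness of solutions they admit a subsequence converging graphically to a hybrid arc $\phi$; since $\delta_k\to 0$ and the data of $\mathcal{H}'_{\delta_k}$ collapse onto those of $\mathcal{H}$, the limit $\phi$ is a solution of $\mathcal{H}$, and since all $\phi_{d_k}(0,0)\in C$ (a compact set) one may also arrange $\phi(0,0)\in C$; finally, graphical convergence of the truncated arcs to $\phi$ is precisely $(\tau,\varepsilon)$-closeness for all large $k$, contradicting the choice of the $\phi_{d_k}$.

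I expect the main obstacle to be the bookkeeping behind the compactness step, specifically the uniform bound on the number of jumps before hybrid time $\tau$. This is the robust counterpart of the non-Zeno property of Proposition~\ref{prop:Prop1}(b): because $Q(2\pi)\neq 0$ (Assumption~\ref{as:As1}) and, by Assumption~\ref{as:AsQ}, any oscillator that fires or sits in $(\dstar,2\pi]$ strictly decreases its phase by an amount that is bounded away from zero near $2\pi$, right after any jump the state satisfies $\max_i x_i\leq 2\pi-c$ for a constant $c>0$ independent of the solution; since along flow $\dot x_i\in[\omega-\delta,\omega+\delta]$, reaching the jump set again then takes at least $c/(\omega+\delta)$ ordinary time, a dwell time that is uniform for all $\delta\leq 1$. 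With this uniform dwell time in hand, the sequential-compactness and graphical-limit machinery of \cite{goebel2012hybrid} applies off the shelf and delivers the conclusion.
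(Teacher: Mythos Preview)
Your approach is correct and is essentially the same as the paper's: embed $\mathcal{H}_d$ into an inflation of $\mathcal{H}$ and then invoke the structural robustness of well-posed hybrid systems. The paper is slightly more economical---it inflates only the flow map, setting $\mathcal{H}'_\rho=(C,f+\rho\mathbb{B},D,G)$, and it cites \cite[Proposition~6.34]{goebel2012hybrid} directly using the pre-forward completeness of $\mathcal{H}$ from the compact set $C$ (which follows from Proposition~\ref{prop:Prop1}(b)), thereby avoiding your backup compactness-and-dwell-time argument altogether.
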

\begin{proof}
Let $\rho$ be a positive real scalar, define \textcolor{blue}{$\mathcal{H}^\prime_{\rho}=(C, F_\rho,D,G)$}, where for each $x\in C$ $F_\rho(x)=f(x)+\rho\mathbb{B}$.
Since $\mathcal{H}$ satisfies the hybrid basic conditions and is pre-forward complete\footnote{Given a set $S$, a hybrid system $\mathcal{H}$ is said to be pre-forward complete from $S$ if every $\phi\in\mathcal{S}_{\mathcal{H}}(S)$ is either complete or bounded; see \cite[Definition 6.12]{goebel2012hybrid}.} from the (compact) set $C$, from \cite[Proposition 6.34]{goebel2012hybrid}  it follows that for every $\varepsilon>0$ and $\tau\geq 0$ there exists $\delta>0$ such that for every solution $\phi_\delta$ to \textcolor{blue}{$\mathcal{H}^\prime_{\delta}$} there exists a solution $\phi$ to $\mathcal{H}$ such that $\phi$ and 
$\phi_\delta$ are $(\tau,\varepsilon)$-close. To conclude, if suffices to notice that whenever $d(t)\in\delta\mathbb{B}$ for each $t\geq 0$, solutions to $\mathcal{H}_d$ are solutions to $\textcolor{blue}{\mathcal{H}^\prime_{\delta}}$ and this concludes the proof.   
\end{proof}
 Essentially, the above result states that in the presence of small perturbations on the natural frequency, the evolution of the perturbed network of PCOs does not differ too much from the one of the unperturbed network. Therefore, one can expect that the convergence towards the splay state is practically preserved in the case of (small) frequency perturbations.      
\section{Numerical Example}
\subsection{Nominal behavior}
In this example in which $n=3$, we want to show both the effectiveness of the proposed methodology and the differences between the dynamics of $\mathcal{H}$ and the results given in \cite{philips2015}. We use the following phase response function
$$
Q(z)=\begin{cases}
0&\,\,\mbox{if}\,\,z\in[0,\frac{4\pi}{3}]\\
-\frac{7}{10}(z-\frac{4\pi}{3})&\,\,\mbox{if}\,\, z\in[\frac{4\pi}{3}, \twopi]
\end{cases}
$$  
\figurename~\ref{fig:plots2} shows the evolution of a solution $\phi$ to $\mathcal{H}$ starting from an initial condition 
$\phi(0,0)\notin\mathcal{A}$. As expected, as $t+j$ goes to infinity, $\phi$ approaches the splay state configuration and the Lyapunov function $V$ is constant during flows, nonincreasing across jumps, and approaches $0$. 
\begin{figure}[h]
\begin{center}
\pstool[scale=0.4,trim=15mm 4mm  15mm 8mm, clip=true]{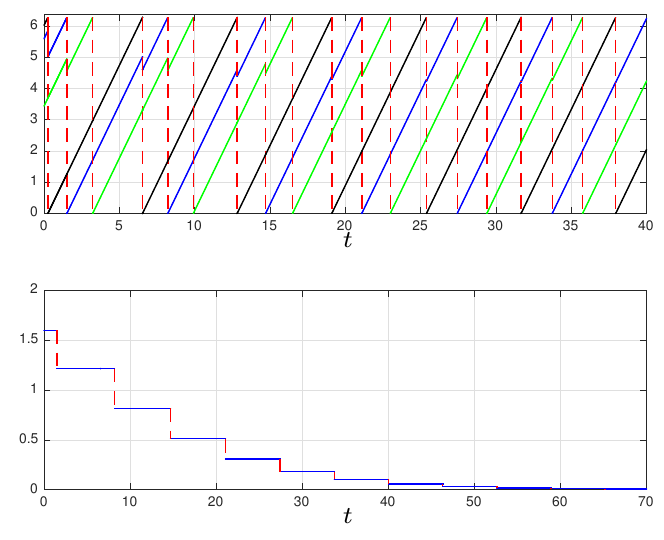}
{\scriptsize
\psfrag{t}{$t$}
\psfrag{phi}{}
\psfrag{V(phi)}{}
}
\end{center}
\caption{Evolution of the solution $\phi$ (top) to $\mathcal{H}$ with $\phi(0,0)=(5.5977, 6.0274, 3.4383)$ and of $V\circ\phi$ (bottom) both projected onto ordinary time.}
\label{fig:plots2}
\end{figure}
Now, as mentioned earlier, we show that the Lypunov like function proposed in \cite{philips2015}, \ie, $x\mapsto \widetilde{V}(x)=\vert x \vert_{\mathcal{L}}$, cannot be employed in the setting analyzed in this paper. To this end, in \figurename~\ref{fig:dist} we report the evolution of $\widetilde{V}$ along the solution $\phi$ to $\mathcal{H}$ analyzed above. Simulations show that, although $\widetilde{V}$ approaches zero as $t+j$ goes to infinity, the considered function increases across some jumps and this makes it unsuitable as a Lyapunov candidate to assess asymptotic stability of the set $\mathcal{A}$ for the hybrid system $\mathcal{H}$.
\begin{figure}[h]
\centering
\pstool[scale=0.4,trim=15mm 1mm  15mm 5mm, clip=true]{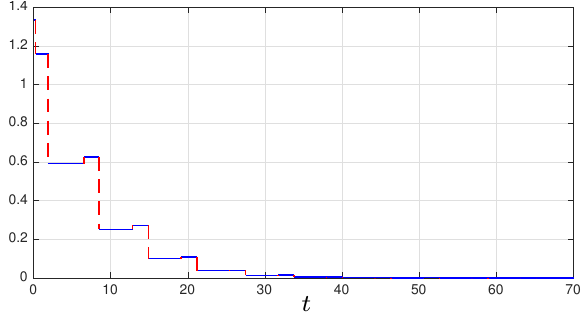}
{\scriptsize
\psfrag{t}{$t$}
\psfrag{dist}{$\widetilde{V}\circ\phi$}
}
\caption{Evolution of the function $\widetilde{V}\circ\phi$ with $\phi(0,0)=(5.5977, 6.0274, 3.4383)$ projected onto ordinary time.}
\label{fig:dist}
\end{figure}
%%%Pert
\subsection{Perturbed behavior}
Now, we want to analyze the effect of  small frequency perturbation on the dynamics of $\mathcal{H}$. In particular,  to study the effect of uncorrelated periodic frequency perturbations, we select $d(t)=\epsilon (\sin(0.5 t),\sin(0.5 t+2/3\pi), \sin(0.5 t+4/3\pi))$, where $\epsilon>0$ represents the maximum amplitude of the frequency variation. \figurename~\ref{fig:plotsPert} reports the evolution of $\mathcal{H}_d$ and of $\mathcal{H}$ for $\epsilon$ equal to $0.05$ and $0.03$, repsectively. As expected, the evolution of $\mathcal{H}_d$ does not differ too much from the evolution of $\mathcal{H}$. \figurename~\ref{fig:plotsV} shows the evolution of the function $V$ along, respectively, the evolution of $\mathcal{H}$ and of $\mathcal{H}_d$ for the two considered values of $\epsilon$. Simulations show that the function $V$ in the presence of small frequency perturbations approaches a ball containing zero. In that simulation, one can remark that the larger the amplitude of the perturbation, the larger the radius of the ball approached by $V$.
\begin{figure}[h]
\centering
\pstool[scale=0.5, trim=15mm 4mm  12mm 7mm, clip=true]{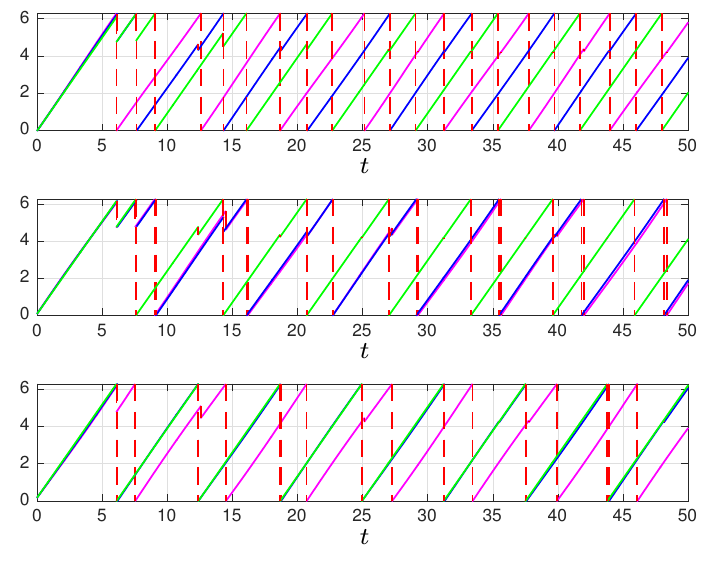}
{\scriptsize
\psfrag{t}{$t$}}
\caption{Evolution of the solution $\phi$ (green) to $\mathcal{H}$ and $\xi$ to $\mathcal{H}_d$ ($\epsilon=0.05$ magenta), ($\epsilon=0.03$ blue) with $\phi(0,0)=\xi(0,0)=(0, 0.1,0.2)$ projected onto ordinary time.}
\label{fig:plotsPert}
\end{figure}
\begin{figure}[h!]
\centering
\pstool[scale=0.4, trim=15mm 1mm  12mm 2mm, clip=true]{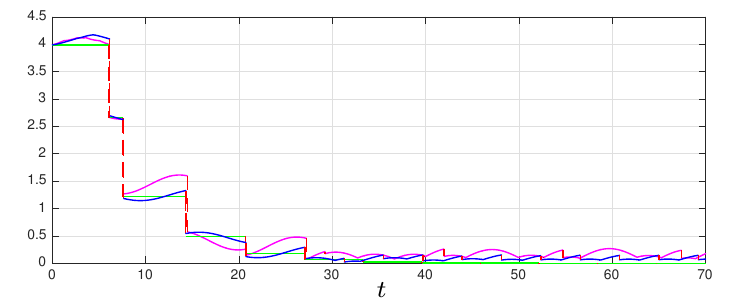}
{\scriptsize
\psfrag{t}{$t$}} 
\caption{Evolution of the function $V\circ\phi$ (green) and of $V\circ\xi$ ($\epsilon=0.05$ magenta), ($\epsilon=0.03$ blue) both projected onto ordinary time.}
\label{fig:plotsV}
\end{figure}
\section{Conclusion}
In this paper, we studied splay state almost global asymptotic stabilization in pulse coupled oscillators. The considered problem was turned into the stabilization problem of a compact set $\mathcal{A}$ wherein oscillators' phases are evenly distributed on the unit circle.  Sufficient conditions on the phase response function to guarantee almost global asymptotic stability of set $\mathcal{A}$ were provided. In particular, almost global asymptotic stability of the set $\mathcal{A}$ is assessed via the use of a novel Lyapunov function along with the invariance principle for hybrid systems in \cite{sanfelice2007invariance}. Moreover, the proposed approach was shown to be robust to small frequency perturbations naturally present in practice. 
The results presented in this technical note are promising and the framework adopted quite flexible to envision interesting extensions of the results presented here. Among them, we mention the extension to more complex network topologies.
\bibliographystyle{plain}
\bibliography{Biblio}

\begin{thebibliography}{10}

\bibitem{cai2009characterizations}
C.~Cai and A.~R. Teel.
\newblock Characterizations of input-to-state stability for hybrid systems.
\newblock {\em Systems \& Control Letters}, 58(1):47--53, 2009.

\bibitem{canavier2010pulse}
C.~Canavier and S.~Achuthan.
\newblock Pulse coupled oscillators and the phase resetting curve.
\newblock {\em Mathematical biosciences}, 226(2):77--96, 2010.

\bibitem{degesys2008towards}
J.~Degesys and R.~Nagpal.
\newblock Towards desynchronization of multi-hop topologies.
\newblock In {\em Proceeding of the {S}econd IEEE {I}nternational {C}onference
  on {S}elf-{A}daptive and {S}elf-{O}rganizing {S}ystems}, pages 129--138.
  IEEE, 2008.

\bibitem{dorfler2014synchronization}
F.~D{\"o}rfler and F.~Bullo.
\newblock Synchronization in complex networks of phase oscillators: A survey.
\newblock {\em Automatica}, 50(6):1539--1564, 2014.

\bibitem{Ferrante2016}
F.~Ferrante and Y.~Wang.
\newblock A hybrid systems approach to splay state stabilization of pulse
  coupled oscillators.
\newblock In {\em Proceedings of the IEEE 55th Conference on Decision and
  Control}, pages 1763--1768, 2016.

\bibitem{gerstner2002spiking}
W.~Gerstner and W.~M. Kistler.
\newblock {\em Spiking neuron models: Single neurons, populations, plasticity}.
\newblock Cambridge university press, 2002.

\bibitem{goebel2012hybrid}
R.~Goebel, R.~G. Sanfelice, and A.~R. Teel.
\newblock {\em Hybrid Dynamical Systems: Modeling, Stability, and Robustness}.
\newblock Princeton University Press, 2012.

\bibitem{mirollo1990synchronization}
R.~E. Mirollo and S.~H. Strogatz.
\newblock Synchronization of pulse-coupled biological oscillators.
\newblock {\em SIAM Journal on Applied Mathematics}, 50(6):1645--1662, 1990.

\bibitem{nunez2015synchronization}
F.~N{\'u}{\~n}ez, Y.~Wang, and F.~J. Doyle.
\newblock Synchronization of pulse-coupled oscillators on (strongly) connected
  graphs.
\newblock {\em IEEE Transactions on Automatic Control}, 60(6):1710--1715, 2015.

\bibitem{nunez2016synchronization}
F.~N{\'u}{\~n}ez, Y.~Wang, A.~R. Teel, and F.~J. Doyle.
\newblock Synchronization of pulse-coupled oscillators to a global pacemaker.
\newblock {\em Systems \& Control Letters}, 88:75--80, 2016.

\bibitem{pagliari2010}
R.~Pagliari, Y.~W.~P. Hong, and A.~Scaglione.
\newblock Bio-inspired algorithms for decentralized round-robin and
  proportional fair scheduling.
\newblock {\em IEEE Journal on Selected Areas in Communications},
  28(4):564--575, 2010.

\bibitem{peskin1975mathematical}
C.~S. Peskin.
\newblock {\em Mathematical aspects of heart physiology}.
\newblock Courant Institute of Mathematical Sciences, New York University,
  1975.

\bibitem{philips2015}
S.~Phillips and R.~Sanfelice.
\newblock Robust asymptotic stability of desynchronization in impulse-coupled
  oscillators.
\newblock {\em IEEE Transactions on Control of Network Systems}, pp(99):1--1,
  2015.

\bibitem{rockafellar2009variational}
R.~T. Rockafellar and R.~J.-B. Wets.
\newblock {\em Variational analysis}, volume 317.
\newblock Springer Science \& Business Media, 2009.

\bibitem{sanfelice2013toolbox}
R.~G. Sanfelice, D.~Copp, and P.~Nanez.
\newblock A toolbox for simulation of hybrid systems in matlab/simulink: Hybrid
  equations (hyeq) toolbox.
\newblock In {\em Proceedings of the 16th international conference on {H}ybrid
  systems: computation and control}, pages 101--106. ACM, 2013.

\bibitem{sanfelice2007invariance}
R.~G. Sanfelice, R.~Goebel, and A.~R. Teel.
\newblock Invariance principles for hybrid systems with connections to
  detectability and asymptotic stability.
\newblock {\em IEEE Transactions on Automatic Control}, 52(12):2282--2297,
  2007.

\bibitem{schilling2005measures}
R.~L. Schilling.
\newblock {\em Measures, integrals and martingales}, volume~13.
\newblock Cambridge University Press, 2005.

\bibitem{sepulchre2007stabilization}
R.~Sepulchre, D.~A. Paley, and N.~E. Leonard.
\newblock Stabilization of planar collective motion: All-to-all communication.
\newblock {\em IEEE Transactions on Automatic Control}, 52(5):811--824, 2007.

\bibitem{wang2012}
Y.~Wang and F.~J. Doyle.
\newblock Optimal phase response functions for fast pulse-coupled
  synchronization in wireless sensor networks.
\newblock {\em IEEE Transactions on Signal Processing}, 60(10):5583--5588,
  2012.

\end{thebibliography}
\end{document}